\documentclass[11pt]{article}



\usepackage[margin=1in]{geometry}
\usepackage[ruled, vlined, linesnumbered]{algorithm2e}
\usepackage{amsmath}
\usepackage{amsfonts}
\usepackage{amsthm}
\usepackage{bbm}
\usepackage[colorlinks = true, citecolor = blue, urlcolor = blue]{hyperref}
\usepackage{enumitem}
\usepackage{todonotes}
\usepackage[square]{natbib}
\usepackage{cleveref}
\usepackage{subcaption}
\usepackage[toc,page]{appendix}

\newcommand{\sA}{\mathcal{A}}

\newcommand{\sC}{\mathcal{C}}

\newcommand{\sL}{\mathcal{L}}
\newcommand{\sP}{\mathcal{P}}
\newcommand{\sS}{\mathcal{S}}
\newcommand{\sT}{\mathcal{T}}
\newcommand{\sV}{\mathcal{V}}

\newcommand{\sbr}[1]{\left\{#1\right\}}

\newcommand{\indic}[1]{\mathbbm{1}_{#1}}
\newcommand{\R}{\mathbb{R}}

\newcommand{\norm}[1]{\left\|#1\right\|}
\newcommand{\abs}[1]{\left|#1\right|}
\newcommand{\set}[1]{\left\{#1\right\}}

\newcommand{\Normal}{\mathcal{N}}

\newcommand{\zeros}{\mathbf{0}}
\newcommand{\ones}{\mathbf{1}}

\DeclareMathOperator{\Unif}{Unif}
\DeclareMathOperator{\subjto}{subject\;to}
\DeclareMathOperator*{\minimize}{minimize\;}

\DeclareMathOperator*{\argmin}{argmin}

\DeclareMathOperator*{\rank}{rank}
\DeclareMathOperator*{\cspan}{span}

\DeclareMathOperator*{\cone}{Cone}
\DeclareMathOperator*{\conv}{Conv}
\DeclareMathOperator*{\hull}{Hull}
\DeclareMathOperator*{\nullspace}{Null}
\DeclareMathOperator*{\interior}{int}

\bibliographystyle{plainnat}


\author{James Yang\\Stanford University
   \and Trevor Hastie\\Stanford University}

\title{A Fast Coordinate Descent Method for High-Dimensional Non-Negative Least Squares using a Unified Sparse Regression Framework}



\newtheorem{theorem}{Theorem}[section]
\newtheorem{corollary}[theorem]{Corollary}
\newtheorem{lemma}[theorem]{Lemma}
\newtheorem{definition}[theorem]{Definition}
\theoremstyle{remark}

\begin{document}
\maketitle

\abstract{
We develop theoretical results that establish a connection across
various regression methods such as the non-negative least squares,
bounded variable least squares, simplex constrained least squares, and lasso.
In particular, we show in general that a polyhedron constrained least squares problem
admits a \emph{locally unique sparse} solution in high dimensions.
We demonstrate the power of our result by concretely quantifying
the sparsity level for the aforementioned methods.
Furthermore, we propose a novel coordinate descent based solver for NNLS in high dimensions
using our theoretical result as motivation.
We show through simulated data and a real data example that
our solver achieves at least a 5x speed-up from the state-of-the-art solvers.
}

\section{Introduction}\label{sec:intro}

Sparse regression methods have grown increasingly popular recently
with the prevalance of high-dimensional data and large computing resource.
Notably, the lasso \citep{Tibshirani1996} is widely used 
for feature selection as it often sets many coefficients exactly to zero.
It is particularly useful in ultra high-dimensional settings such as in genomics
\citep{He2022,Qian2020,Li2020} where feature selection is especially meaningful for interpretability.
The non-negative least squares (NNLS) \citep{Lawson1995}
is also a reliable tool for many practitioners across diverse fields
as it often provides a physically meaningful and interpretable solution 
due to its empirical sparsity
\citep{li:spike,slawski:nnls,chou2022nonnegativesquaresoverparametrization}.
Some theoretical work exists around understanding this behavior
\citep{Bruckstein:2008,Wang:2009,Wang:2011}.
A natural extension of NNLS is the bounded variable least squares (BVLS) \citep{stark1995bounded},
which extends the non-negativity constraint to any box constraint.
Similar sparsity occurs in high dimensions where many coefficients lie at their respective boundaries.
The simplex constrained least squares (SLS) 
is at the heart of synthetic controls in causal inference \citep{abadie:synthetic}
where the potential outcome of no intervention for the treatment group is estimated
using a weighted average of the donor pool's outcomes.
\citet{abadie:synthetic} notes that we often observe sparsity in the weighted average,
which is desirable for interpretability among practitioners
especially when the donor pool size is large.
Indeed, we see in \citet{lei2024inferencesyntheticcontrolsrefined}[Table 2]
that many real data examples have more donors than the pre-intervention time periods,
which guarantees the existence of a sparse solution per our results in \Cref{sec:theoretical}.
Finally, the inequality simplex constrained least squares (ISLS) is commonly seen in portfolio optimization,
e.g., a long-only portfolio under a Markowitz model 
with a leverage constraint so that the asset weights $w$ 
satisfy $w \geq 0$ and $\sum_i w_i \leq C$ for some $C > 0$.
The model may be in a high-dimensional setting if the covariance matrix of the assets $\Sigma \in \R^{p \times p}$ 
admits a low-rank representation $\Sigma = FF^\top$ where $F \in \R^{p \times k}$ and $k < p$.
In this case, we often observe near zero weights for many assets
using a standard interior point solver.

While these methods may seem disparate at first glance, 
they share a deep connection through the more general problem of polyhedron constrained least squares.
Our first contribution is a theoretical result in \Cref{sec:theoretical} that establishes
the existence of a ``sparse'' solution to any polyhedron constrained least squares problem,
thereby unifying our understanding of sparse solutions to the aforementioned methods.
The notion of sparsity is extended to mean that the solution 
\emph{binds} many linear inequality constraints as equality constraints in the polyhedron
(see \Cref{def:polyhedron}).
Moreover, we quantify the number of such binding constraints
and show the \emph{local uniqueness property}, which states that
such sparse solutions are unique within the affine subspace defined by the binding constraints.
We apply our results to NNLS, BVLS, SLS, ISLS, lasso, and box constrained lasso to show that
the promised sparsity is at least $p-\rank(X) + O(1)$ where $X \in \R^{n \times p}$ is the feature matrix.
Lastly, we argue that the local uniqueness property helps explain
the success of many active-set optimizers for these problems.
Our second contribution is a computational result in \Cref{sec:algorithm} that 
uses the previous result as motivation to derive 
a coordinate descent based solver for BVLS and NNLS with screening and an active-set strategy.
We show in \Cref{sec:exp} that our solver 
performs competitively in terms of accuracy and speed in high-dimensional settings.
\section{Preliminaries and notations}\label{sec:notations}

Let $\R$, $\R^n$, and $\R^{m\times n}$ denote the space of real numbers,
real vectors of size $n$, and real matrices of shape $m\times n$, respectively.
Denote $\overline{\R} := \R \cup \set{\pm\infty}$ as the extended real line.
We denote $\zeros$ and $\ones$ to be the vector or matrix of zeros and ones, respectively,
depending on the context.
We define $[n] := \set{1,\ldots, n}$ to be the set of integers 
ranging from $1$ to $n$.
For any vector $v \in \R^n$ and a set of indices $S \subseteq [n]$,
let $v_S$ be the subset of $v$ along indices in $S$.
We denote $v_{-S} \equiv v_{S^c}$ as the subset of $v$
along indices not in $S$.
Similarly, for any matrix $A$, we denote $A_{S}$ as the rows of $A$ in $S$
and $A_{-S} \equiv A_{S^c}$.
For any two vectors $x,y \in \R^n$, we define $x \leq y$ to be the element-wise
less-than comparison operator and similarly for other comparison operators.
Let $x \odot y$ denote the Hadamard product of two vectors. 
The $\ell_p$-norm of a vector $x$ is given by $\norm{x}_p := (\sum_{i=1}^n \abs{x_i}^p)^{1/p}$.
For a matrix $X \in \R^{m \times n}$,
$\cspan(X)$ denotes the linear span of the columns of $X$. 
For a real-valued function $f : \R^n \to \R$,
we denote its gradient as $\nabla f$ and the $i$th partial derivative as $\partial_i f$.
Given a set $C \subseteq \R^n$, 
$\indic{C}(x)$ is defined to be $1$ whenever $x \in C$ and $0$ otherwise,
unless stated otherwise.
The positive and negative parts of a value $x \in \R$ 
are given by $x_+ := \max(x, 0)$ and $x_- := \max(-x, 0)$, respectively.
We denote $\Pi_{C}(\cdot)$ as the projection operator onto the set $C$ whenever defined.
For a matrix $X$, we may abuse the notation and denote $\Pi_{X}(\cdot) \equiv \Pi_{\cspan(X)}(\cdot)$.
\section{Theoretical results}\label{sec:theoretical}

\subsection{Background}\label{sec:theoretical:background}

We first begin with the definition of a polyhedron.
Note that some authors define a polyhedron more generally
and refer to \Cref{def:polyhedron} specifically as a \emph{convex polyhedron}.
Since we always work with convex polyhedra in this paper, we simplify the terminology.

\begin{definition}[Polyhedron]\label{def:polyhedron}
$\sP \subseteq \R^p$ is a \emph{polyhedron} if 
there exists $A \in \R^{m \times p}$ and $b \in \R^m$ for some $m$
such that $\sP \equiv \set{x \in \R^p : Ax \leq b}$.
We say that the polyhedron $\sP$ is \emph{represented by $A$ and $b$}.
An element $x\in \sP$ \emph{binds the $i$th constraint} if $a_i^\top x = b_i$.
Moreover, we say $x \in \sP$ \emph{binds $k$ constraints} if there exists $S \subseteq [m]$ of size $k$
such that $x$ binds the $i$th constraint for every $i \in S$.
We say a constraint $a_i$ is non-zero if $a_i \neq \zeros$.
\end{definition}

It is sometimes useful to identify a polyhedron
with a set of ``non-trivial'' inequality constraints
and a set of equality constraints that are always satisfied by its elements.
Concretely, it is useful to represent a polyhedron $\sP$ as
\begin{align}
    \sP
    =
    \set{x \in \R^p : A_{-S} x \leq b_{-S}, \, A_S x = b_S}
\end{align}
where $S := \set{i \in [m] : a_i^\top x = b_i, \, \forall x \in \sP}$
is the set of equality constraints.
\Cref{lem:polyhedron-decomposition} shows that 
$\set{x \in \R^p : A_{-S} x \leq b_{-S}}$ has a non-empty interior,
which is a way to express 
that this set of inequality constraints is ``non-trivial.''
Moreover, the affine space $\set{x \in \R^p : A_S x = b_S}$ is uniquely determined.
This motivates our definition of a dimension of a polyhedron.

\begin{definition}[Dimension of a polyhedron]\label{def:polyhedron-dimension}
Let $\sP \subseteq \R^p$ be a non-empty polyhedron represented by $A \in \R^{m\times p}$ and $b \in \R^{m}$. 
Let $S := \set{i \in [m] : a_i^\top x = b_i, \, \forall x \in \sP}$.
We call $S$ the \emph{maximal affine constraints of $\sP$}.
We define the \emph{dimension of a polyhedron $\sP$} 
denoted as $\dim(\sP)$ to be $p-\rank(A_{S})$.
\end{definition}

Intuitively, $p-\rank(A_S)$ is the degrees of freedom in representing
an element $x \in \sP$, and hence, the effective dimension of the polyhedron.
Since the affine space determined by $S$ is uniquely determined,
$\dim(\sP)$ is well-defined.

Next, we extend the notion of a \emph{hull} in \Cref{def:hull}.

\begin{definition}[Hull]\label{def:hull}
Let $Q \in \R^{n \times p}$ be any matrix and $\sP \subseteq \R^p$ be any set.
The \emph{$\sP$-hull} of $Q$ is defined to be 
\begin{align}
    \hull(Q, \sP) := \set{x \in \R^n : x = Qw, \, w \in \sP}
\end{align}
If $\sP$ is empty, we take $\hull(Q, \sP) \equiv \R^n$.
An element of $\hull(Q, \sP)$ is called a \emph{$\sP$-combination of $Q$}.
We say that $x \in \hull(Q, \sP)$ is a \emph{$\sP$-combination of $Q$ with $w \in \sP$}
if $x = Qw$.
\end{definition}

\Cref{def:hull} is a bona fide extension of commonly used hulls.
For example, a convex hull of a set of points $Q$ is $\hull(Q, \sP)$
where $\sP$ is the unit simplex; a conical hull of $Q$ is where $\sP$ is the positive orthant.
This extension will become useful as we will soon discuss the $\sP$-hull of $Q$
when $\sP$ is a polyhedron.
As a first result, we show in \Cref{lem:polyhedron-linear}
that if $\sP$ is a polyhedron, then $\hull(Q, \sP)$ is also a polyhedron.

\subsection{Main results}\label{sec:theoretical:main}

In this section, we discuss our main results for showing the 
existence of a sparse polyhedral regression solution in a high-dimensional setting
and the local uniqueness property of such solutions.
We first give a high-level description of our results. 
Suppose $\sP$ is a polyhedron represented by $A \in \R^{m \times p}$ and $b \in \R^m$.
We will show that there always exists a solution to
the polyhedral regression
\begin{align}
    \minimize_{\beta \in \sP} \frac{1}{2} \norm{y - X \beta}_2^2
    \label{eq:polyhedral-regression}
\end{align}
that binds many constraints of $A$.
Moreover, we will show the local uniqueness property
that such a solution is unique when restricted
to the affine subspace defined by the binding constraints.

We first restrict our attention to the simple case of a polyhedron 
$\sP$ with a non-empty interior.
Hence, the maximal affine constraints $S$ is empty.
Suppose that $x \in \hull(Q, \sP)$ so that $x = Qw$ for some $w \in \sP$.
Our goal is to show that we can find a (possibly different) $\tilde{w} \in \sP$
that binds at least a certain number of constraints of $A$
while reproducing $x$, i.e., $x = Q \tilde{w}$.
\Cref{lem:caratheodory-hull-full} formalizes this claim
and quantifies the number of binding constraints.
Moreover, it also shows that the binding constraints form a full row rank sub-matrix of $A$,
a fact that will become useful when we prove the local uniqueness property in \Cref{thm:caratheodory-hull-local-uniqueness}.

\begin{figure}[t]
    \centering  
    \includegraphics[width=0.8\linewidth]{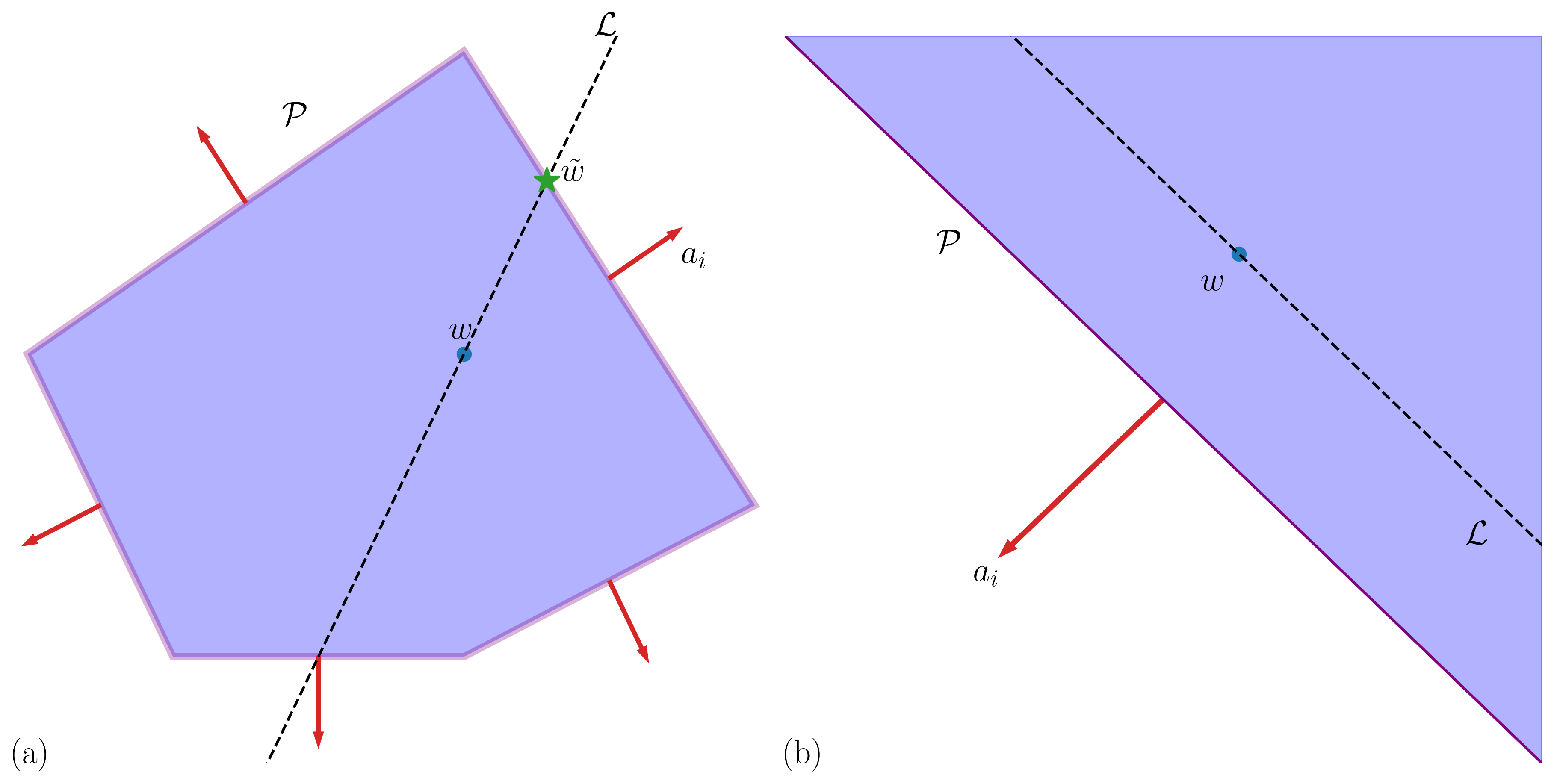}
    \caption{%
    Visualization of the main proof idea of \Cref{lem:caratheodory-hull-full}.
    The polyhedron $\sP \subseteq \R^2$ (with a non-empty interior) is given by the shaded blue region
    defined by the constraints $a_i$ (red arrows).
    Suppose $w \in \sP$ (blue dot) is such that $x = Qw$.
    If $v \neq \zeros \in \nullspace(Q)$, 
    then any point along the line $\sL := \set{w + tv : t \in \R}$
    (dotted black line) reproduces $x$.
    As long as the direction $v$ is not orthogonal to every constraint,
    i.e., $Av \neq \zeros$,
    then there always exists an intersection point $\tilde{w} \in \sL \cap \sP$
    by \Cref{lem:polyhedron-intersect}.
    The left plot shows the case of a polytope when $Av \neq \zeros$
    with the intersection point $\tilde{w}$ denoted by a green star.
    On the other hand, the right plot shows the opposite case 
    of a single half-space where no such an intersection point exists.
    }
    \label{fig:polyhedron-intersect}
\end{figure}

\begin{lemma}\label{lem:caratheodory-hull-full}
Let $Q \in \R^{n \times p}$ be any matrix
and $\sP \subseteq \R^p$ be any polyhedron represented by 
$A \in \R^{m \times p}$ and $b \in \R^m$ with a non-empty interior.
Suppose that $\nullspace(Q) \setminus \set{\zeros} \subseteq \nullspace(A)^c$.
If $x$ is in the $\sP$-hull of $Q$,
then $x$ can be written as a $\sP$-combination of $Q$ 
with an element $w$ that binds at least 
$\min(\rank(A), (p-n)_+)$ constraints of $A$
that form a full row rank sub-matrix of $A$.
\end{lemma}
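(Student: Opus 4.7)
The plan is an iterative boundary-hitting argument. Starting from any $w_0 \in \sP$ with $Q w_0 = x$ (guaranteed by $x \in \hull(Q, \sP)$), I would maintain at iteration $\ell$ an index set $S_\ell$ of size $k_\ell$ such that $A_{S_\ell}$ has full row rank, $A_{S_\ell} w_\ell = b_{S_\ell}$, and moreover every constraint that $w_\ell$ currently binds has its row in $\cspan(A_{S_\ell}^\top)$; this maximality can be enforced by taking a maximal linearly independent subset of the currently bound indices.

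At each step, if $k_\ell \geq \min(\rank(A), (p-n)_+)$ we are done. Otherwise, Grassmann's formula gives
\begin{align}
\dim(\nullspace(Q) \cap \nullspace(A_{S_\ell})) \geq (p - \rank(Q)) + (p - k_\ell) - p \geq (p - n) - k_\ell,
\end{align}
which is positive whenever $k_\ell < p - n$. Hence a non-zero $v \in \nullspace(Q) \cap \nullspace(A_{S_\ell})$ exists, and the hypothesis $\nullspace(Q) \setminus \set{\zeros} \subseteq \nullspace(A)^c$ forces $A v \neq \zeros$. The maximality of $S_\ell$ also ensures $a_j^\top v = 0$ for every currently bound $j$, so some strictly unbound index $j$ must have $a_j^\top v \neq 0$.

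I would then apply a polyhedron-intersection argument in the spirit of \Cref{fig:polyhedron-intersect}: after flipping the sign of $v$ if necessary, the step $t^* := \min\set{(b_j - a_j^\top w_\ell)/a_j^\top v : a_j^\top v > 0}$ is positive and finite, and $w_{\ell+1} := w_\ell + t^* v$ newly binds some constraint $i^*$ achieving the minimum. Since $a_{i^*}^\top v \neq 0$ while $A_{S_\ell} v = \zeros$, the row $a_{i^*}$ is linearly independent of the rows of $A_{S_\ell}$, so extending $S_\ell$ by $i^*$ (and re-maximizing among any simultaneous hits) strictly grows $k_\ell$ while preserving the full row rank invariant. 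We also have $Q w_{\ell+1} = Q w_\ell = x$ since $v \in \nullspace(Q)$.

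The iteration terminates because $k_\ell \leq \rank(A)$ and strictly increases at each step. At termination either the target bound is reached directly, or $\nullspace(Q) \cap \nullspace(A_{S_\ell}) = \set{\zeros}$; by Grassmann the latter forces $(p - \rank(Q)) + (p - k_\ell) \leq p$, i.e., $k_\ell \geq p - \rank(Q) \geq (p-n)_+$, and combined with $k_\ell \leq \rank(A)$ this yields the claim. The main obstacle I foresee is the polyhedron-intersection step: justifying $t^* < \infty$ requires both the sign-flip trick and the bookkeeping that $v$ annihilates the row span of the entire bound set (not only $S_\ell$), which in turn motivates the maximality invariant imposed on $S_\ell$.
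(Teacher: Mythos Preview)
Your argument is correct and amounts to a more direct organization of the same geometric idea the paper uses. The paper proceeds by induction on $p$: it pads the columns of $Q$ with zeros to reduce to the base case $p=n+1$, binds a single constraint via \Cref{lem:polyhedron-intersect}, and then invokes \Cref{lem:polyhedron-low-rank} to pass to the lower-dimensional face $\sP'$ represented by $A' = A_{-1}V$, on which the induction hypothesis is applied; showing $\rank(A') = \rank(A)-1$ and that the accumulated binding rows stay full rank requires separate algebraic verifications at the end of each inductive step. You instead stay in the ambient space throughout, replacing the change-of-coordinates machinery by the single requirement $v \in \nullspace(Q)\cap\nullspace(A_{S_\ell})$ and using Grassmann's inequality to certify this intersection is non-trivial. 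Your maximality invariant on $S_\ell$ (that it spans the rows of \emph{all} currently bound constraints) is precisely what forces $t^\star>0$, and the independence of the new row $a_{i^\star}$ from $A_{S_\ell}$ is immediate from $a_{i^\star}^\top v\neq 0$ while $A_{S_\ell}v=\zeros$, so the full-row-rank conclusion falls out automatically rather than needing the paper's dedicated argument. The paper's formulation does pay a dividend elsewhere, since the same $V$-reduction is reused verbatim in \Cref{thm:caratheodory-hull} and \Cref{thm:caratheodory-hull-local-uniqueness}; for this lemma in isolation your route is shorter and more transparent.
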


We give a proof of \Cref{lem:caratheodory-hull-full} in
\Cref{app:proofs-main:caratheodory-hull-full}.
Note that \Cref{lem:caratheodory-hull-full} is only interesting 
in the high-dimensional setting of $p > n$.
The main idea of the proof is to consider a solution $w \in \sP$
such that $x = Qw$ and move $w$ along a direction in the null space of $Q$
until a boundary of $\sP$ is hit.
Once we hit a boundary, we may restrict ourselves to that facet of the polyhedron
and repeat the process in this lower-dimensional polyhedron.
The reader should imagine ``peeling'' off the facet,
discarding the rest of the polyhedron,
and viewing this facet in a lower-dimensional space.
Each time we restrict to the lower-dimensional space,
we remove a degree of freedom from both $Q$ and $\sP$.
Since $p > n$, we can then guarantee that at least 
$p-n$ degrees of freedom can be removed,
so long as there are enough degrees of freedom, i.e., $\rank(A)$, to remove from $\sP$.
With this intuition in mind, it is no surprise that \Cref{lem:caratheodory-hull-full} holds.

We now discuss a few technical aspects of \Cref{lem:caratheodory-hull-full}.
The hypothesis that $\nullspace(Q) \setminus \set{\zeros} \subseteq \nullspace(A)^c$
guarantees that moving along \emph{any} direction in $\nullspace(Q)$
will hit a boundary of $\sP$
with the help of \Cref{lem:polyhedron-intersect}.
This condition may seem convoluted at first, 
but it has a meaningful geometrical interpretation.
In words, any non-zero vector $v \in \nullspace(Q)$
must not be orthogonal to every constraint of $A$, i.e., $Av \neq \zeros$.
We give a visualization in \Cref{fig:polyhedron-intersect}.
We show in the left plot a case of a polytope where
any direction $v$ (not just in $\nullspace(Q)$) 
satisfies $Av \neq \zeros$.
On the other hand, the right plot depicts the case of a single half-space
where our condition may fail (specifically along the direction parallel to the boundary).
In the same vein,
the assumption that $\sP$ has a non-empty interior is also crucial,
since otherwise, the line $\sL = \set{w + tv : t \in \R}$ 
may only cross at $w$ 
while satisfying $Av \neq \zeros$
for every $v \in \nullspace(Q)$
(imagine $\sP$ on the xy-plane and a line piercing through $w$ not parallel to the z-axis).

We now remove the assumption that $\sP$ has a non-empty interior
while also strengthening the conclusion of \Cref{lem:caratheodory-hull-full}.
We show in \Cref{thm:caratheodory-hull} our main theoretical result,
which is an extension of \Cref{lem:caratheodory-hull-full} to the most general case.

\begin{theorem}[Existence of a sparse representation]\label{thm:caratheodory-hull}
Let $Q \in \R^{n \times p}$ be any matrix
and $\sP \subseteq \R^p$ be any polyhedron represented by $A \in \R^{m \times p}$ and $b \in \R^m$
with $S$ as its maximal affine constraints.
Further let $V \in \R^{p \times \dim(\sP)}$ be any full column rank matrix
such that $A_SV = \zeros$.
Suppose that $\nullspace(QV) \setminus \set{\zeros} \subseteq \nullspace(A_{-S}V)^c$.
If $x$ is in the $\sP$-hull of $Q$,
then $x$ can be written as a $\sP$-combination of $Q$ 
with an element $w$ that binds at least $\min(\rank(A)-\rank(A_S), \dim(\sP)-\rank(QV))$ 
constraints of $A_{-S}$ that form a full row rank sub-matrix.
\end{theorem}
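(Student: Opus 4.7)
\textbf{The plan} is to reduce the general case to \Cref{lem:caratheodory-hull-full} by affine reparametrization of $\sP$ through $V$. Since $\sP$ lies in the affine subspace $\set{w : A_S w = b_S}$ and the full column rank matrix $V$ has $\dim(\sP) = p - \rank(A_S) = \dim(\nullspace(A_S))$ columns satisfying $A_S V = \zeros$, its columns span $\nullspace(A_S)$. Fix any $w_0 \in \sP$ (nonempty because $x$ lies in its $\sP$-hull); then every $w \in \sP$ admits a unique representation $w = w_0 + V u$ with $u \in \R^{\dim(\sP)}$, and $\sP$ is in bijection with the polyhedron
\begin{align*}
    \tilde{\sP} := \set{u \in \R^{\dim(\sP)} : \tilde{A} u \leq \tilde{b}}, \quad \tilde{A} := A_{-S} V, \quad \tilde{b} := b_{-S} - A_{-S} w_0.
\end{align*}

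\textbf{I would then apply} \Cref{lem:caratheodory-hull-full} to the pair $(QV, \tilde{\sP})$. Its null-space hypothesis is exactly $\nullspace(QV) \setminus \set{\zeros} \subseteq \nullspace(A_{-S} V)^c$, which is assumed. For the non-empty interior of $\tilde{\sP}$: by maximality of $S$, for each $i \notin S$ there exists $x_i \in \sP$ with $a_i^\top x_i < b_i$, and a convex combination yields $x^\circ \in \sP$ with $A_{-S} x^\circ < b_{-S}$; the unique $u^\circ$ solving $V u^\circ = x^\circ - w_0$ (well-defined because $x^\circ - w_0 \in \nullspace(A_S) = \cspan(V)$ and $V$ is injective) is then an interior point of $\tilde{\sP}$. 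Writing $x = Q(w_0 + V u) = Q w_0 + (QV) u$ places $x - Q w_0$ in the $\tilde{\sP}$-hull of $QV$, so the lemma delivers $u^\star \in \tilde{\sP}$ binding at least $\min(\rank(\tilde{A}), \dim(\sP) - \rank(QV))$ of its constraints as a full row rank block. I would first tighten \Cref{lem:caratheodory-hull-full}'s count from $(\dim(\sP) - n)_+$ to $\dim(\sP) - \rank(QV)$, which the null-space argument sketched there delivers automatically since the degrees of freedom removed equal $\dim(\nullspace(QV)) = \dim(\sP) - \rank(QV)$.

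\textbf{To finish}, set $w^\star := w_0 + V u^\star \in \sP$; then $Q w^\star = x$, each binding constraint $\tilde{a}_i^\top u^\star = \tilde{b}_i$ unwinds to $(A_{-S})_i w^\star = (b_{-S})_i$, and injectivity of $V$ transfers full row rank from any row block of $\tilde{A} = A_{-S} V$ to the corresponding rows of $A_{-S}$. Combined with the identity $\rank(A_{-S} V) = \rank(A) - \rank(A_S)$ — which follows from $\nullspace(A_{-S} V) = V^{-1}(\nullspace(A))$, the inclusion $\nullspace(A) = \nullspace(A_S) \cap \nullspace(A_{-S}) \subseteq \cspan(V)$, and rank-nullity — this yields the first term in the stated minimum. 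The main obstacle is the non-empty interior check for $\tilde{\sP}$, which must be performed in the reparametrized ambient space $\R^{\dim(\sP)}$ and genuinely exploits the maximality of $S$ (since $\sP$ itself may have empty interior in $\R^p$); a secondary subtlety is sharpening \Cref{lem:caratheodory-hull-full}'s bound so that it involves $\rank(Q)$ rather than the ambient dimension $n$, as the theorem's conclusion demands.
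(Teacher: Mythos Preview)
Your proposal is correct and follows essentially the same route as the paper: reparametrize $\sP$ through $V$ onto a lower-dimensional polyhedron with non-empty interior, apply \Cref{lem:caratheodory-hull-full} to $(QV,\tilde{\sP})$, then lift back and verify the rank identity $\rank(A_{-S}V)=\rank(A)-\rank(A_S)$ and the full-row-rank transfer. The only notable difference is that the paper carries out the sharpening of \Cref{lem:caratheodory-hull-full} from $p-n$ to $p-\rank(Q)$ explicitly via a QR factorization (replacing $Q$ by its $R$-factor), whereas you assert it follows ``automatically'' from the null-space count; also note that your appeal to injectivity of $V$ for the row-rank transfer is unnecessary, since full row rank of $(A_{-S})_T V$ already forces full row rank of $(A_{-S})_T$ for any $V$.
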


We give a proof of \Cref{thm:caratheodory-hull} in
\Cref{app:proofs-main:caratheodory-hull}.
We first explain the assumptions of the theorem. 
As explained previously, a polyhedron may have an empty interior,
and therefore, \Cref{lem:caratheodory-hull-full} cannot be applied directly.
The remedy is the same as in the previous discussion;
we first find the subspace in which the polyhedron lies (i.e., given by $S$) and
remove the degrees of freedom from $Q$ and $\sP$ to restrict to a lower-dimensional space.
We then invoke \Cref{lem:caratheodory-hull-full} on the lower-dimensional space
where the polyhedron is guaranteed to have a non-empty interior by \Cref{lem:polyhedron-low-rank}.
The proof reveals that the new set of points is precisely $QV$ and 
the new polyhedron is represented by the constraint matrix $A_{-S}V$.
Moreover, we show in the proof that $\rank(A_{-S}V) = \rank(A)-\rank(A_S)$.
Lastly, we note that $\dim(\sP) - \rank(QV)$ has been improved from $\dim(\sP) - n$.
We argue that without loss of generality we may assume
$QV$ to be full row rank since otherwise we can perform a QR-decomposition
and use the ``$R$'' matrix in place of $QV$.

We provide a few consequences of \Cref{thm:caratheodory-hull}.
The first consequence is \Cref{cor:caratheodory}, which shows that the famous
Carath\'{e}odory's Theorem is an immediate corollary.
Secondly, we can finally establish the existence of a
``sparse'' solution for polyhedral regression.
This is formally stated in \Cref{cor:polyhedral-regression}.

\begin{corollary}[Carath\'{e}odory's Theorem]\label{cor:caratheodory}
Let $Q \in \R^{n \times p}$ be any matrix.
If $x$ is a convex combination of the columns of $Q$,
then it can be written as a convex combination of 
at most $n+1$ columns of $Q$.
Equivalently, if $x$ is a conical combination of the columns of $Q$,
then it can be written as a conical combination of
at most $n$ columns of $Q$.
\end{corollary}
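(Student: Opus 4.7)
The plan is to apply \Cref{thm:caratheodory-hull} twice, each time with a different choice of polyhedron $\sP$ tailored to the two claims. The trivial cases are immediate: if $p \leq n$ (conical) or $p \leq n+1$ (convex), then the original combination already uses at most the advertised number of columns, so assume otherwise.

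For the conical statement, I would take $\sP = \set{w \in \R^p : w \geq \zeros}$, the non-negative orthant, represented by $A = -I_p$ and $b = \zeros$. No constraint is always tight, so the maximal affine set $S$ is empty, $\rank(A_S) = 0$, and $\dim(\sP) = p$. Choose $V = I_p$, which trivially has full column rank and satisfies the vacuous condition $A_S V = \zeros$. The hypothesis $\nullspace(QV) \setminus \set{\zeros} \subseteq \nullspace(A_{-S}V)^c$ reduces to $\nullspace(Q)\setminus\set{\zeros} \subseteq \nullspace(-I_p)^c$, which is automatic since $\nullspace(-I_p) = \set{\zeros}$. \Cref{thm:caratheodory-hull} then guarantees a conic representer $w$ of $x$ that binds at least $\min(p, p - \rank(QV)) = p - \rank(Q)$ of the non-negativity constraints, i.e., has at least $p-\rank(Q) \geq p - n$ zero entries. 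So $x$ is a conical combination of at most $\rank(Q) \leq n$ columns of $Q$.

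For the convex statement, I would take $\sP$ as the unit simplex $\set{w \in \R^p : w \geq \zeros,\, \ones^\top w = 1}$, represented by stacking $-I_p$ with the two rows $\ones^\top$ and $-\ones^\top$ (and $b$ equal to $\zeros$, $1$, and $-1$). The maximal affine set $S$ consists of the latter two rows, so $\rank(A_S) = 1$, $\dim(\sP) = p - 1$, and $\rank(A) - \rank(A_S) = p-1$. Pick any $V \in \R^{p \times (p-1)}$ of full column rank with $\ones^\top V = \zeros$, e.g., an orthonormal basis of $\ones^\perp$. Since $A_{-S} = -I_p$, we get $A_{-S}V = -V$, whose null space is $\set{\zeros}$ by full column rank of $V$; the hypothesis of \Cref{thm:caratheodory-hull} is therefore vacuous. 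The theorem then produces a convex representer $w$ binding at least $\min(p-1, (p-1)-\rank(QV)) = (p-1) - \rank(QV) \geq (p-1) - n$ of the non-negativity constraints, so $w$ has at most $n+1$ nonzero entries, giving the claim.

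The main obstacle is really just the bookkeeping of identifying $S$, computing $\rank(A)$, $\rank(A_S)$, and $\dim(\sP)$ correctly for the simplex, and verifying that $V$ can be chosen so that the null space hypothesis is vacuous. Once the representation of $\sP$ is fixed, the fact that $A_{-S}$ equals $-I_p$ (or $-V$ after the change of variables) makes the null space check trivial, so there is no real geometric work beyond plugging into the theorem.
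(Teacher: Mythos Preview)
Your proposal is correct and follows essentially the same route as the paper: apply \Cref{thm:caratheodory-hull} with $\sP$ the non-negative orthant for the conical claim and the unit simplex for the convex claim, then read off the sparsity from the number of binding non-negativity constraints. The only cosmetic difference is that you encode the simplex equality $\ones^\top w = 1$ via the pair of rows $\ones^\top$ and $-\ones^\top$, whereas the paper lists a single row $\ones^\top$ and simply declares $S=\{1\}$; your version is arguably cleaner, but the argument is otherwise identical.
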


\begin{proof}
Without loss of generality, we may assume $p > n$.

We first consider the case of convex combinations.
Define the polyhedron $\sP \subseteq \R^{p}$ represented by
\begin{align}
    A :=
    \begin{bmatrix}
        \ones^\top \\
        -I
    \end{bmatrix}
    ,\quad
    b :=
    \begin{bmatrix}
        1 \\
        \zeros
    \end{bmatrix}
\end{align}
Then, $S := \set{1}$ forms the maximal affine constraints.
It is clear that $\dim(\sP) = \rank(A) - \rank(A_S)= p-1$.
Then, for the matrix $V$ in \Cref{thm:caratheodory-hull},
which is injective, we have that
$\nullspace(QV) \setminus \set{\zeros} 
\subseteq \R^{p-1} \setminus \set{\zeros} = \nullspace(V)^c = \nullspace(A_{-S}V)^c$.
By definition, 
the convex hull of $Q$ is equivalent to the $\sP$-hull of $Q$.
Then, if $x$ is a convex combination of $Q$,
i.e., a $\sP$-combination of $Q$,
then by \Cref{thm:caratheodory-hull},
it is a $\sP$-combination of $Q$ with $w$
that binds at least $p-1-\rank(QV)$ constraints of $A_{-S}$.
In other words, $w \in \sP$ contains at most $\rank(QV) + 1 \leq n + 1$
positive values so that $x=Qw$ is a convex combination of at most $n+1$ columns of $Q$.

The second case of conical combinations is easier.
Define the polyhedron $\sP \subseteq \R^{p}$ represented by
\begin{align}
    A := -I ,\quad b := \zeros
\end{align}
It is clear that $\dim(\sP) = \rank(A) - \rank(A_S) = p$.
Since the maximal affine constraints is empty
and $A$ is invertible, the conditions of \Cref{thm:caratheodory-hull} are satisfied.
If $x$ is a conical combination of $Q$,
i.e., a $\sP$-combination of $Q$,
then by \Cref{thm:caratheodory-hull},
it is a $\sP$-combination of $Q$ with $w$
that binds at least $p-\rank(Q)$ constraints of $A$.
In other words, $w \in \sP$ contains at most $\rank(Q) \leq n$
positive values so that $x = Qw$ is a conical combination of at most $n$ columns of $Q$.
\end{proof}

\begin{corollary}[Polyhedral regression]\label{cor:polyhedral-regression}
Let $X \in \R^{n \times p}$ be a feature matrix 
and $y \in \R^{n}$ a response vector.
Suppose $\sP \subseteq \R^{p}$ is a polyhedron
represented by $A \in \R^{m\times p}$ and $b \in \R^{m}$
that satisfies the conditions of \Cref{thm:caratheodory-hull}
with $Q \equiv X$. 
Let $S$ be the maximal affine constraints of $\sP$
and $V \in \R^{p \times \dim(\sP)}$ such that $A_S V = \zeros$.
Consider the polyhedral regression given by \labelcref{eq:polyhedral-regression}.
Then, there exists a solution $\hat{\beta} \in \R^p$
that binds at least $\min(\rank(A)-\rank(A_S), \dim(\sP) - \rank(XV))$ constraints of $A_{-S}$
that form a full row rank sub-matrix.
\end{corollary}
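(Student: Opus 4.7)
The plan is to observe that the polyhedral regression is essentially a projection problem onto the set $\hull(X, \sP)$, and that once the optimal fitted vector $\hat{x}$ is identified, \Cref{thm:caratheodory-hull} directly produces a coefficient vector with the required sparsity structure. I would proceed in three short stages.

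First, I would establish that a solution to \labelcref{eq:polyhedral-regression} exists. The image $\hull(X, \sP) = \set{X\beta : \beta \in \sP}$ is itself a polyhedron by \Cref{lem:polyhedron-linear} (referenced at the end of \Cref{sec:theoretical:background}), hence closed and convex. Thus $\hat{x} := \Pi_{\hull(X, \sP)}(y)$ is well-defined, and any $\tilde{\beta} \in \sP$ with $X\tilde{\beta} = \hat{x}$ achieves the minimum of $\tfrac{1}{2}\norm{y - X\beta}_2^2$ over $\sP$, since the objective depends on $\beta$ only through $X\beta$. By definition of $\hat{x}$, at least one such $\tilde{\beta} \in \sP$ exists, so $\hat{x}$ lies in the $\sP$-hull of $X$.

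Second, I would apply \Cref{thm:caratheodory-hull} with the choice $Q \equiv X$. The hypotheses of the theorem on $\sP$, $A$, $V$, and the null-space condition $\nullspace(XV) \setminus \set{\zeros} \subseteq \nullspace(A_{-S}V)^c$ are carried over verbatim from the assumptions of the corollary. Since $\hat{x} \in \hull(X, \sP)$, the theorem produces a $\sP$-combination $\hat{\beta} \in \sP$ of $X$ satisfying $X\hat{\beta} = \hat{x}$ that binds at least $\min(\rank(A) - \rank(A_S), \dim(\sP) - \rank(XV))$ constraints of $A_{-S}$, and such that the corresponding rows of $A_{-S}$ form a full row rank sub-matrix.

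Third, I would conclude that $\hat{\beta}$ is a solution to \labelcref{eq:polyhedral-regression}: indeed $X\hat{\beta} = \hat{x}$ is the minimizer of $\norm{y - z}_2^2$ over $z \in \hull(X, \sP)$, and $\hat{\beta} \in \sP$, so $\hat{\beta}$ attains the same objective value as the minimum. I do not foresee a real obstacle here: the corollary is essentially a repackaging of \Cref{thm:caratheodory-hull} once one notices that the squared-error loss depends on $\beta$ only through the fitted vector $X\beta$. The only subtlety is the existence step, which is handled cleanly by projecting onto the polyhedral image $\hull(X, \sP)$ rather than trying to minimize the quadratic directly over $\sP$.
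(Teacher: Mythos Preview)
Your proposal is correct and mirrors the paper's own proof almost exactly: the paper also observes that $\hull(X,\sP)$ is a polyhedron by \Cref{lem:polyhedron-linear} (hence closed and convex), projects $y$ onto it, and then invokes \Cref{thm:caratheodory-hull} on the projected point. Your write-up is simply a slightly more expanded version of the same two-step argument.
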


\begin{proof}
Note that any solution $\beta$ for \labelcref{eq:polyhedral-regression}
must satisfy
\begin{align}
    \Pi_{\hull(X, \sP)}(y) = X \beta
\end{align}
where the projection exists since $\hull(X, \sP)$ is closed and convex,
which follows directly from \Cref{lem:polyhedron-linear}.
By \Cref{thm:caratheodory-hull},
since $\Pi_{\hull(X, \sP)}(y) \in \hull(X, \sP)$,
we have the desired claim.
\end{proof}

We conclude this section with a final result on 
\emph{the local uniqueness property} of polyhedrons in \Cref{thm:caratheodory-hull-local-uniqueness}.
The proof is provided in \Cref{app:proofs-main:caratheodory-hull-local-uniqueness}.

\begin{theorem}[Local uniqueness property]\label{thm:caratheodory-hull-local-uniqueness}
Consider the setting of \Cref{thm:caratheodory-hull}.
Suppose that 
\begin{align*}
    \rank(A) - \rank(A_S)
    \geq 
    \dim(\sP) - \rank(QV)
\end{align*}
Let $w$ be the sparse representation that binds at least $\dim(\sP) - \rank(QV)$ constraints of $A_{-S}$.
Let $T$ be the set of indices such that $(A_{-S})_T$ denotes the binding constraints.
Define $\sT := \set{w \in \R^p : (A_{-S})_T w = (b_{-S})_{T}}$.
Then, $w$ is the unique element in $\sP \cap \sT$ such that $x = Qw$.
\end{theorem}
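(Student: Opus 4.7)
The plan is to show that any $w' \in \sP \cap \sT$ with $Qw' = x$ must equal $w$. Setting $v := w - w'$, the fact that both $w, w' \in \sP$ forces $A_S v = \zeros$ (since $A_S$ is the maximal affine constraint), both being in $\sT$ forces $(A_{-S})_T v = \zeros$, and $Qw = Qw' = x$ forces $Qv = \zeros$. Since $V$ has full column rank with $A_S V = \zeros$ and $\dim(\nullspace(A_S)) = p - \rank(A_S) = \dim(\sP)$, its columns are a basis of $\nullspace(A_S)$, so I can write $v = V z$ for a unique $z \in \R^{\dim(\sP)}$. Substituting reduces the goal to showing that $z \in \nullspace(QV) \cap \nullspace((A_{-S})_T V)$ implies $z = \zeros$.

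Next, I would let $T^* \subseteq T$ be the subset of size $\dim(\sP) - \rank(QV)$ provided by \Cref{thm:caratheodory-hull} whose rows form a full row rank sub-matrix of $A_{-S}$; the hypothesis $\rank(A) - \rank(A_S) \geq \dim(\sP) - \rank(QV)$ is exactly what ensures this size is attainable. Since $\nullspace((A_{-S})_T V) \subseteq \nullspace((A_{-S})_{T^*} V)$, it suffices to prove $\nullspace(QV) \cap \nullspace((A_{-S})_{T^*} V) = \set{\zeros}$, equivalently that the stacked matrix with blocks $QV$ and $(A_{-S})_{T^*} V$ has column rank $\dim(\sP)$. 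A dimension count shows the rank is at most $\rank(QV) + |T^*| = \dim(\sP)$, so the task reduces to verifying that the rowspaces of $QV$ and $(A_{-S})_{T^*} V$ are linearly independent.

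The main obstacle is that this rowspace independence does not follow from the bare statement of \Cref{thm:caratheodory-hull}; it has to be drawn from the iterative construction underlying \Cref{lem:caratheodory-hull-full}. Concretely, after the $V$-reduction inside \Cref{thm:caratheodory-hull}, that construction picks at each iteration $k$ a search direction $v_k \in \nullspace(QV) \cap \bigcap_{j < k} \nullspace((A_{-S})_{i_j} V)$ and selects the next binding constraint $i_k$ to be one satisfying $(A_{-S})_{i_k} V v_k \neq 0$. Hence $(A_{-S})_{i_k} V$ cannot lie in the rowspace of $QV$ augmented with the previous $(A_{-S})_{i_j} V$, so every step adds exactly one to the stacked rowspace dimension. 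After all $|T^*| = \dim(\sP) - \rank(QV)$ iterations the stacked rowspace reaches dimension $\rank(QV) + |T^*| = \dim(\sP)$, giving the full column rank. From $z = \zeros$ one concludes $v = \zeros$ and therefore $w = w'$, completing the local uniqueness argument.
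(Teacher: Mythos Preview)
Your proposal is correct and follows the same core idea as the paper: both open up the iterative construction of \Cref{lem:caratheodory-hull-full} and use that at each step $k$ the chosen direction $v_k$ lies in $\nullspace(Q)$ and in the nullspaces of the previously bound constraints while satisfying $a_{i_k}^\top v_k \neq 0$. You package this as the stacked matrix $\begin{bmatrix} QV \\ (A_{-S})_{T^*}V \end{bmatrix}$ having full column rank $\dim(\sP)$, whereas the paper states the equivalent fact that $\rank(QV_1\cdots V_k)=\rank(Q)$ is preserved at each reduction, so after $\dim(\sP)-\rank(QV)$ steps the reduced $Q$ is square and invertible and the representation in $\sT$ is forced.
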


The local uniqueness property says that
the sparse solution $w$ given by \Cref{thm:caratheodory-hull} 
must be unique in a sufficiently restricted subspace.
The condition for \Cref{thm:caratheodory-hull-local-uniqueness} can be equivalently written as
\begin{align}
    \rank(QV) + \rank(A) \geq p
\end{align}
This mild condition formalizes the idea that there must be enough degrees of freedom
in the set of points $QV$ and the constraint matrix $A$ combined
for the existence of a solution in such a sufficiently restricted subspace.
In many practical settings, this condition will hold true
as we often have $\rank(A) \geq p$.

The local uniqueness property offers a way of
concretely understanding active-set optimization algorithms.
Active-set strategies typicially employ heuristics
to first enforce many constraints and simplify an optimization problem
to a lower-dimensional problem.
If the binding constraints are correctly guessed,
then the solution to the lower-dimensional problem is optimal for the original problem.
These algorithms have found a place in the optimization literature
as sometimes the state-of-the-art methods for various polyhedral regression problems
such as the non-negative least squares and the lasso
\citep{friedman:2010,yang2024adelie,Lawson1995,MYRE2017755,stark1995bounded}.
\Cref{thm:caratheodory-hull-local-uniqueness} suggests that
if these strategies have an accurate way of 
guessing the binding constraints of an optimal solution, 
then the solution is unambiguous in the restricted subspace.
Hence, any optimizer should, in principle, quickly converge to the optimum.
In \Cref{sec:algorithm}, we will utilize this fact
to derive a new coordinate descent based solver for 
BVLS and NNLS that works especially well in high-dimensional settings.

\subsection{Applications on sparse regression models}\label{sec:theoretical:applications}

In this section, we use our main results from \Cref{sec:theoretical:main}
to unify many popular sparse regression models 
such as NNLS, BVLS, SLS, ISLS, lasso, and box constrained lasso
under the umbrella of polyhedral regression.
In all these examples, we establish the existence of a sparse solution.
In the case of lasso, we rediscover a well-known result
that there exists a solution with at most $\rank(X)$ non-zero entries \citep{tibshirani2012lassoproblemuniqueness}.
For all other models, we show an analogous result, 
which we believe is novel and does not exist in the literature
to the best of our knowledge.

\subsubsection{Non-negative least squares}

The non-negative least squares (NNLS) problem is given by
\begin{align}
    \minimize_{\beta \geq \zeros} \frac{1}{2} \norm{y-X\beta}_2^2
\end{align}
Using \Cref{cor:polyhedral-regression},
we now show in \Cref{cor:nnls-sparse} 
that there \emph{always} exists a solution with at most $\rank(X)$ positive entries.

\begin{corollary}[Existence of a sparse NNLS solution]\label{cor:nnls-sparse}
There exists an NNLS solution with at most $\rank(X)$ positive values.
\end{corollary}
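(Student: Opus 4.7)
The plan is to apply Corollary \ref{cor:polyhedral-regression} directly to the non-negativity polyhedron, just as the second case of Carath\'{e}odory's Theorem was handled inside the proof of \Cref{cor:caratheodory}. Specifically, I would take $\sP := \set{\beta \in \R^p : \beta \geq \zeros}$, which is represented by $A := -I_p \in \R^{p \times p}$ and $b := \zeros$. The first step is to identify the maximal affine constraints: for every $i \in [p]$ there exists $\beta \in \sP$ with $\beta_i > 0$, so no single inequality $-e_i^\top \beta \leq 0$ is forced to hold with equality throughout $\sP$. Hence $S = \emptyset$, $\rank(A_S) = 0$, and $\dim(\sP) = p - \rank(A_S) = p$.

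Next I would verify the hypotheses of \Cref{thm:caratheodory-hull} (which underlie \Cref{cor:polyhedral-regression}) with $Q \equiv X$. Since $A_S$ is the empty matrix, I can take $V := I_p$, which is full column rank and trivially satisfies $A_S V = \zeros$. The required containment
\begin{align*}
    \nullspace(XV) \setminus \set{\zeros} \subseteq \nullspace(A_{-S} V)^c
\end{align*}
reduces to $\nullspace(X) \setminus \set{\zeros} \subseteq \nullspace(-I_p)^c = \R^p \setminus \set{\zeros}$, which holds automatically.

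With the hypotheses checked, \Cref{cor:polyhedral-regression} yields an NNLS solution $\hat{\beta}$ that binds at least
\begin{align*}
    \min\pr{\rank(A) - \rank(A_S),\; \dim(\sP) - \rank(XV)}
    = \min(p,\; p - \rank(X)) = p - \rank(X)
\end{align*}
constraints of $A_{-S} = -I_p$. Binding the $i$th such constraint means precisely that $\hat{\beta}_i = 0$, so at least $p - \rank(X)$ coordinates of $\hat{\beta}$ vanish, which leaves at most $\rank(X)$ positive entries.

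There is no real obstacle here beyond bookkeeping: the corollary does all of the work, and the only substantive check is the trivial null-space condition. If one wanted to be even more careful, one could remark that the projection $\Pi_{\hull(X,\sP)}(y)$ invoked in the proof of \Cref{cor:polyhedral-regression} exists because $\hull(X,\sP) = \set{X\beta : \beta \geq \zeros}$ is a closed convex cone (closedness follows from \Cref{lem:polyhedron-linear} cited in that proof), which ensures the NNLS optimum is attained and the argument applies verbatim.
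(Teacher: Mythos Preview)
Your proposal is correct and follows essentially the same route as the paper: represent the positive orthant by $A=-I$, $b=\zeros$, note that $S=\emptyset$ so the hypotheses of \Cref{thm:caratheodory-hull} are trivially met, and then invoke \Cref{cor:polyhedral-regression} to obtain a solution binding at least $p-\rank(X)$ constraints. Your write-up is simply more explicit about the choice $V=I_p$ and the null-space check, which the paper compresses into the remark that $A$ is invertible.
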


\begin{proof}
The polyhedron $\sP$ is given by the positive orthant $\R_+^p$,
which can be represented by $A = -I$ and $b = \zeros$.
Since $A$ is invertible and the maximal affine constraint set is empty, 
the conditions of \Cref{thm:caratheodory-hull} are trivially satisfied.
Moreover, $\dim(\sP) = \rank(A) - \rank(A_S) = p$.
By \Cref{cor:polyhedral-regression},
there exists a solution $\hat{\beta} \in \R^p$
that binds at least $p-\rank(X)$ constraints of $A$.
That is, $\hat{\beta}$ has at most $\rank(X)$ positive values.
\end{proof}

\subsubsection{Bounded variable least squares}

The bounded variable least squares (BVLS) problem is given by
\begin{align}
    \minimize_{\ell \leq \beta \leq u}
    \frac{1}{2} \norm{y - X \beta}_2^2
    \label{eq:algorithm:bvls}
\end{align}
which can be thought of as an extension of NNLS.
Similar to NNLS, we show in \Cref{cor:bvls-sparse}
the existence of a sparse solution to BVLS.

\begin{corollary}[Existence of a sparse BVLS solution]\label{cor:bvls-sparse}
There exists a BVLS solution with at most $\rank(X)$ non-boundary values.    
\end{corollary}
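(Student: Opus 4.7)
The plan is to mirror the NNLS argument, treating BVLS as a polyhedral regression over the box $\sP = \set{\beta \in \R^p : \ell \leq \beta \leq u}$, represented by
\begin{align*}
    A = \begin{bmatrix} -I \\ I \end{bmatrix},
    \quad
    b = \begin{bmatrix} -\ell \\ u \end{bmatrix}.
\end{align*}

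First I would handle the generic case in which $\ell_i < u_i$ for every $i$. Then $\sP$ has a non-empty interior, the maximal affine constraints $S$ are empty, $\dim(\sP) = p$, and $\rank(A) = p$, so the hypothesis $\nullspace(QV) \setminus \set{\zeros} \subseteq \nullspace(A_{-S}V)^c$ of \Cref{cor:polyhedral-regression} is trivial exactly as in the NNLS proof. Applying the corollary with $Q \equiv X$ yields a solution $\hat\beta$ that binds at least $\min(\rank(A), \dim(\sP) - \rank(X)) = p - \rank(X)$ constraints of $A$, and these binding rows form a full row rank sub-matrix.

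The only ingredient that differs from NNLS is converting ``binding constraints'' into ``boundary coordinates.'' Every row of $A$ is of the form $\pm e_i$, so if both $e_i$ and $-e_i$ were selected, they would be linearly dependent, contradicting the full row rank conclusion of \Cref{thm:caratheodory-hull}. Hence the $p - \rank(X)$ binding rows correspond to $p - \rank(X)$ \emph{distinct} indices $i$, each enforcing $\hat\beta_i \in \set{\ell_i, u_i}$, which leaves at most $\rank(X)$ non-boundary entries.

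For the degenerate case where $\ell_i = u_i$ for $i$ in some subset $S_0 \subseteq [p]$, both the upper and lower rows for each such $i$ belong to the maximal affine constraints, automatically forcing $\hat\beta_i$ to the boundary. Choosing $V$ to be the submatrix of $I$ whose columns are indexed by $[p] \setminus S_0$ annihilates $A_S$ and reduces everything to the previous argument in ambient dimension $p - |S_0|$, producing at least $\dim(\sP) - \rank(XV) \geq p - |S_0| - \rank(X)$ further distinct-coordinate boundaries outside $S_0$. Together with $S_0$, this again leaves at most $\rank(X)$ non-boundary values. The main (and only real) obstacle is the bookkeeping observation that a full row rank subset of $\set{\pm e_i}_{i=1}^p$ cannot contain $e_i$ and $-e_i$ simultaneously; once that is noted, the result is a direct application of \Cref{cor:polyhedral-regression}.
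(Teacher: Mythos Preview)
Your proposal is correct and follows essentially the same route as the paper: represent the box as a polyhedron with $A = \begin{bmatrix} -I \\ I \end{bmatrix}$, invoke \Cref{cor:polyhedral-regression} to get at least $p - \rank(X)$ binding constraints, and then translate binding constraints into boundary coordinates. The paper obtains the last step by the direct observation that, when $\ell_i < u_i$, a coordinate can bind at most one of its two bounds, whereas you use the equivalent full-row-rank conclusion of \Cref{thm:caratheodory-hull}; you also treat the degenerate case $\ell_i = u_i$ explicitly, which the paper's proof tacitly assumes away.
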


\begin{proof}
The polyhedron $\sP$ is given by 
\begin{align}
    A := \begin{bmatrix}
        -I \\
        I
    \end{bmatrix}
    , \quad
    b := \begin{bmatrix}
        -\ell \\
        u
    \end{bmatrix}
\end{align}
Since $A$ has full column rank and the maximal affine constraint set is empty,
the conditions of \Cref{thm:caratheodory-hull} are trivially satisfied.
Moreover $\dim(\sP) = \rank(A) - \rank(A_S) = p$.
By \Cref{cor:polyhedral-regression}, there exists a solution
$\hat{\beta} \in \R^p$ that binds at least $p - \rank(X)$ constraints of $A$.
Note that a coefficient is either non-binding or exactly one of the lower or upper bounds is binding.
Hence, we indeed have that $\hat{\beta}$ has at most $\rank(X)$ non-binding entries.
\end{proof}

\subsubsection{Simplex constrained least squares}

The simplex constrained least squares (SLS) is given by
\begin{align}
    \minimize_{\beta \geq \zeros, \ones^\top \beta = C}
    \frac{1}{2} \norm{y - X \beta}_2^2
\end{align}
and the inequality version (ISLS) is given by
\begin{align}
    \minimize_{\beta \geq \zeros, \ones^\top \beta \leq C}
    \frac{1}{2} \norm{y - X \beta}_2^2
\end{align}
for some $C \geq 0$.
We show in \Cref{cor:sls-sparse} the existence of a sparse solution to both problems.

\begin{corollary}[Existence of a sparse ISLS/SLS solution]\label{cor:sls-sparse}
There exists an ISLS solution with at most $\rank(X)$ positive values,
and an SLS solution with at most $\rank(X)+1$ positive values. 
\end{corollary}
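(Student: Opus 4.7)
The plan is to dispatch SLS directly by \Cref{cor:polyhedral-regression} in the same style as the NNLS and BVLS proofs, while ISLS requires an additional KKT-based sharpening step beyond what that corollary alone supplies.

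For SLS, I would represent the simplex $\sP = \set{\beta \geq \zeros, \ones^\top \beta = C}$ by
\begin{align*}
    A := \begin{bmatrix} -I \\ \ones^\top \\ -\ones^\top \end{bmatrix},
    \qquad
    b := \begin{bmatrix} \zeros \\ C \\ -C \end{bmatrix}.
\end{align*}
The maximal affine constraints are $S = \set{p+1, p+2}$ with $\rank(A_S) = 1$, so $\dim(\sP) = p-1$ and $\rank(A) - \rank(A_S) = p-1$. Any $V \in \R^{p \times (p-1)}$ of full column rank with $\ones^\top V = 0$ makes $A_{-S}V = -V$ injective, so the hypothesis of \Cref{thm:caratheodory-hull} is trivial. \Cref{cor:polyhedral-regression} then yields an SLS solution binding at least $p - 1 - \rank(XV) \geq p - 1 - \rank(X)$ of the non-negativity constraints in $A_{-S}$, i.e., at most $\rank(X) + 1$ positive entries.

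For ISLS, applying \Cref{cor:polyhedral-regression} directly to $\sP = \set{\beta \geq \zeros, \ones^\top \beta \leq C}$ with $S = \emptyset$ and $V = I$ gives at least $p - \rank(X)$ binding constraints drawn from $\set{\beta_i = 0} \cup \set{\ones^\top \beta = C}$; in the worst case one of them is the sum constraint, leaving only $\rank(X) + 1$ positives. To sharpen to $\rank(X)$, let $\beta^*$ be an ISLS optimum with KKT dual $\lambda \geq 0$ on the sum constraint. If $\lambda = 0$ then $\beta^*$ satisfies NNLS KKT and is an NNLS optimum; the LP $\min \ones^\top \beta$ over $\set{\beta \geq \zeros, X\beta = X\beta^*}$ has value at most $\ones^\top \beta^* \leq C$, and any vertex optimum $\tilde\beta$ has at most $\rank(X)$ nonzeros (standard LP theory, also derivable from \Cref{cor:polyhedral-regression}), is NNLS-optimal by construction, ISLS-feasible, and hence ISLS-optimal. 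If $\lambda > 0$, complementary slackness gives $\ones^\top \beta^* = C$, and stationarity on $I = \mathrm{supp}(\beta^*)$ gives $X_I^\top(y - X\beta^*) = \lambda \ones_I$, so $\ones_I \in \cspan(X_I^\top)$ and every $v \in \nullspace(X_I)$ satisfies $\ones_I^\top v = 0$. Whenever $|I| > \rank(X_I)$, pick a non-zero $v \in \nullspace(X_I)$, extend by zero outside $I$, and move $\beta^* \to \beta^* + tv$ for $t$ chosen to zero out one more coordinate while retaining non-negativity; fit, sum, and KKT with the same $\lambda$ are all preserved, so the perturbed point remains ISLS-optimal with strictly smaller support. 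Iterating terminates at $|I| \leq \rank(X_I) \leq \rank(X)$.

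The main obstacle is this ISLS sharpening: \Cref{cor:polyhedral-regression} treats the inequality sum constraint on the same footing as the non-negativity constraints, and cannot by itself rule out that one unit of sparsity is consumed by $\ones^\top \beta = C$ binding. The KKT dichotomy is precisely what breaks the tie --- either $\lambda = 0$ lets us retreat to an NNLS optimum via the LP, or $\lambda > 0$ forces $\ones_I \in \cspan(X_I^\top)$, which is exactly the algebraic condition producing a $\nullspace(X_I)$ direction also orthogonal to $\ones_I$ and hence enabling a support-reducing perturbation along the sum-constraint boundary.
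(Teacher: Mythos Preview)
Your argument is correct. The SLS half is essentially the paper's proof with a cosmetically different encoding of the equality constraint (you add both $\ones^\top$ and $-\ones^\top$ rows, the paper simply declares the single $\ones^\top$ row to be always binding); the computation is identical.

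For the ISLS sharpening, your route genuinely differs from the paper's. The paper never introduces the dual variable $\lambda$ or an auxiliary LP. It works directly with the $\hat\beta$ returned by \Cref{cor:polyhedral-regression} and splits on whether $\ones^\top\hat\beta < C$ (done immediately) or $\ones^\top\hat\beta = C$. In the latter case, with the active set $\sA$ of size $\rank(X)+1$, it takes any nonzero $v \in \nullspace(X_{\cdot\sA})$, flips its sign so that $\ones^\top v \leq 0$, observes that such a $v$ must have a strictly negative entry, and slides $\hat\beta_\sA + tv$ until a coordinate hits zero; the sum constraint can only loosen, so feasibility and the objective are preserved. This is more elementary than your KKT dichotomy --- one perturbation step, no appeal to LP vertex theory, no dual. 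Your approach, on the other hand, extracts more structure: the $\lambda>0$ branch explains \emph{why} a support-reducing move can stay exactly on the face $\ones^\top\beta=C$ (because stationarity forces $\ones_I \in \cspan(X_I^\top)$), and the $\lambda=0$ branch makes the reduction to NNLS explicit. Both buy the same conclusion; the paper's version is shorter, yours is more informative about the optimality geometry.
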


\begin{proof}
We first consider the ISLS problem.
The polyhedron $\sP$ is represented by
\begin{align}
    A := \begin{bmatrix}
        -I \\
        1^\top 
    \end{bmatrix}
    ,\quad 
    b := \begin{bmatrix}
        \zeros \\
        C
    \end{bmatrix}
\end{align}
Since $A$ has full column rank and the maximal affine constraint set is empty,
the conditions of \Cref{thm:caratheodory-hull} are trivially satisfied.
Moreover, $\dim(\sP) = \rank(A) - \rank(A_S) = p$.
By \Cref{cor:polyhedral-regression}, there exists a solution
$\hat{\beta} \in \R^p$ that binds at least $p-\rank(X)$ constraints of $A$.
If $\ones^\top \hat{\beta} < C$, then we have the desired claim.
Otherwise, there can be at most $\rank(X) + 1$ positive values in $\hat{\beta}$.
Without loss of generality, suppose $\hat{\beta}$ has exactly $\rank(X) + 1$ positive values.
Let $\sA := \set{j \in [p] : \hat{\beta}_j > 0}$.
Note that $X_{\cdot \sA}$ must have a non-trivial null space
since $\abs{\sA} = \rank(X) + 1$ by construction.
Consider any $v \neq \zeros \in \nullspace(X_{\cdot \sA})$.
Without loss of generality, we may assume $\ones^\top v \leq 0$.
Then, $v$ must have some strictly negative entry since otherwise $\ones^\top v \leq 0$
implies that $v \equiv \zeros$, which is a contradiction.
Then, $\ones^\top (\hat{\beta}_{\sA} + t v) \leq C$ for all $t \geq 0$
and there must exist a smallest $t$ (in magnitude) such that 
$\hat{\beta}_{\sA} + tv$ contains a zero while remaining in $\sP$.
Moreover, $X_{\sA} (\hat{\beta}_{\sA} + tv) = X_{\sA} \hat{\beta}_{\sA}$ by construction
so that the objective remains the same.
We have then constructed a possibly different optimal solution
with at most $\rank(X)$ positive values.

For the SLS problem,
the polyhedron $\sP$ is the same as in the ISLS case 
but the last constraint is always binding.
Since $A_{-S} \equiv -I$ is invertible,
the conditions of \Cref{thm:caratheodory-hull} are trivially satisfied.
Moreover, $\dim(\sP) = \rank(A) - \rank(A_S) = p-1$.
Let $V \in \R^{p \times \dim(\sP)}$ such that $A_S V = \zeros$.
By \Cref{cor:polyhedral-regression},
there exists a solution $\hat{\beta} \in \R^p$
that binds at least $p-1-\rank(XV) \geq p-1-\rank(X)$ constraints of $A_{-S}$.
That is, $\hat{\beta}$ has at most $\rank(X) + 1$ positive values. 
\end{proof}

\subsubsection{Lasso}

The lasso problem is given by
\begin{align}
    \minimize_{\beta} \frac{1}{2} \norm{y - X \beta}_2^2 + \lambda \norm{\beta}_1
    \label{eq:lasso}
\end{align}
for a regularization parameter $\lambda \geq 0$.
A perhaps interesting fact is that despite \labelcref{eq:lasso}
being an unconstrained problem, one can show that
there exists some $C \geq 0$ such that \labelcref{eq:lasso} is equivalent to the constrained problem
\citep{hastie01statisticallearning}
\begin{align}
    \minimize_{\norm{\beta}_1 \leq C} \frac{1}{2} \norm{y - X\beta}_2^2
    \label{eq:lasso-constrained}
\end{align}
When viewed this way, it becomes clearer why the lasso
exhibits very similar behavior to other polyhedral regression models such as NNLS.
Namely, the constraint $\norm{\beta}_1 \leq C$ defines a polytope.
In \Cref{cor:lasso-sparse},
we will use \labelcref{eq:lasso-constrained} to rediscover
the result that there exists a lasso solution with at most $\rank(X)$ non-zero values.
As a novel extension, we show in \Cref{cor:lasso-box-sparse}
that a similar result holds for the box constrained lasso problem given by
\begin{align}
    \minimize_{\ell \leq \beta \leq u} \frac{1}{2} \norm{y - X\beta}_2^2 + \lambda \norm{\beta}_1
    \label{eq:lasso-box}
\end{align}

\begin{corollary}[Existence of a sparse lasso solution]\label{cor:lasso-sparse}
There exists a lasso solution with at most $\rank(X)$ non-zero values. 
\end{corollary}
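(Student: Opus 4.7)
The plan is to reduce the lasso to an instance of ISLS via variable splitting, so that \Cref{cor:sls-sparse} applies directly. First I would invoke the equivalence \labelcref{eq:lasso}--\labelcref{eq:lasso-constrained}: for a suitable $C \geq 0$, the solutions of the penalized and constrained lasso coincide, so it suffices to exhibit a sparse optimum of $\minimize_{\norm{\beta}_1 \leq C} \frac{1}{2}\norm{y-X\beta}_2^2$. The $\ell_1$-ball is itself a polytope, but its direct representation requires $2^p$ half-space constraints (one per sign pattern), which is unwieldy; variable splitting lifts the problem to a cleaner polyhedron.

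The key step is to introduce $\tilde{\beta} = (\beta_+, \beta_-) \in \R^{2p}$ and $\tilde{X} := [X,\, -X] \in \R^{n \times 2p}$ and consider the ISLS problem
\begin{align*}
    \minimize_{\tilde{\beta} \geq \zeros,\, \ones^\top \tilde{\beta} \leq C}
    \frac{1}{2} \norm{y - \tilde{X} \tilde{\beta}}_2^2 .
\end{align*}
I would then verify that the two problems share optimal values and optima correspond: if $\tilde{\beta}$ is ISLS-feasible, then $\beta := \beta_+ - \beta_-$ satisfies $\norm{\beta}_1 \leq \ones^\top \tilde{\beta} \leq C$ and $X\beta = \tilde{X}\tilde{\beta}$; conversely, $\tilde{\beta} := (\max(\beta,0), \max(-\beta,0))$ splits any lasso-feasible $\beta$ into an ISLS-feasible point with the same residual and the same objective.

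Finally, I would apply \Cref{cor:sls-sparse} to the ISLS instance to obtain an optimal $\tilde{\beta}^*$ with at most $\rank(\tilde{X}) = \rank([X,\,-X]) = \rank(X)$ positive entries. Defining $\beta^* := \beta_+^* - \beta_-^*$ yields a lasso optimum whose support is contained in the union of the supports of $\beta_+^*$ and $\beta_-^*$, which has at most $\rank(X)$ elements. I do not anticipate a real obstacle here: the potentially delicate point is only that the $\rank(X)+1$ upper bound for SLS degrades to $\rank(X)$ here because we use the inequality form (ISLS), which matches the constrained lasso feasible set exactly; this is already handled inside \Cref{cor:sls-sparse}.
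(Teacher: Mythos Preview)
Your proposal is correct but takes a genuinely different reduction from the paper's. The paper fixes an arbitrary lasso solution, identifies its sign pattern via a diagonal sign matrix $D^j$, and observes that the intersection of the $\ell_1$-ball with the corresponding orthant is (after the change of variables $\beta \mapsto D^j\beta$) an ISLS feasible set; it then invokes \Cref{cor:sls-sparse} on the problem with design $XD^j$, noting $\rank(XD^j)=\rank(X)$. You instead lift to $\R^{2p}$ via the standard positive/negative-part splitting $\tilde X=[X,\,-X]$, which turns the \emph{entire} $\ell_1$-ball into a single ISLS instance and avoids the sign-pattern partition altogether; the rank identity $\rank([X,\,-X])=\rank(X)$ then gives the same bound. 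Both routes land on \Cref{cor:sls-sparse}; yours is slightly slicker (one global ISLS call rather than a $2^p$-orthant case analysis), while the paper's route stays in $\R^p$ and makes the sign structure explicit, which it reuses almost verbatim for the box-constrained extension in \Cref{cor:lasso-box-sparse}.
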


\begin{proof}
We use the constrained form \labelcref{eq:lasso-constrained}.
For any given lasso solution, it must lie in one of the $2^p$
disjoint partitions of the $\ell_1$-ball given by
\begin{align}
    \sP^j
    :=
    \set{\beta \in \R^p : A D^j \beta \leq b}
    , \quad
    A := \begin{bmatrix}
        -I \\
        \ones^\top
    \end{bmatrix}
    , \quad
    b := \begin{bmatrix}
        \zeros \\
        C
    \end{bmatrix}
\end{align}
where $D^j$ enumerates all diagonal sign matrices.
Hence, it suffices to show that the least squares 
constrained to any one of the $\sP^j$ polyhedrons
admits a solution with at most $\rank(X)$ non-zero values. 
Now, the least squares constrained to $\sP^j$ can be reparametrized as
\begin{align*}
    \minimize_{\beta} &\quad \frac{1}{2} \norm{y - X D^j \beta}_2^2 \\
    \subjto &\quad \beta \geq \zeros ,\quad \ones^\top \beta \leq C
\end{align*}
which is an ISLS problem.
By \Cref{cor:sls-sparse}, there exists a solution $\hat{\beta}$
with at most $\rank(XD^j) \equiv \rank(X)$ (since $D^j$ is invertible) positive values.
Then, the original solution is given by $D^j \hat{\beta}$,
which does not change the number of non-zero values.
\end{proof}

\begin{corollary}[Existence of a sparse box constrained lasso solution]\label{cor:lasso-box-sparse}
There exists a box constrained lasso solution 
with at most $\rank(X)$ non-zero, non-boundary values.
\end{corollary}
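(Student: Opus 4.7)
The plan is to parallel the proofs of \Cref{cor:lasso-sparse} and \Cref{cor:sls-sparse}. First, I would rewrite the box constrained lasso \labelcref{eq:lasso-box} in its equivalent constrained form
\[
    \minimize_{\ell \leq \beta \leq u,\, \norm{\beta}_1 \leq C} \frac{1}{2}\norm{y - X\beta}_2^2
\]
for some $C \geq 0$, so that the problem is a polyhedral regression. Any optimal $\hat{\beta}$ lies in at least one of the $2^p$ sign-partitioned polyhedra
\[
    \sP^j := \set{\beta \in \R^p : D^j\beta \geq \zeros,\; \ell \leq \beta \leq u,\; \ones^\top D^j\beta \leq C},
\]
where $D^j$ enumerates the diagonal sign matrices. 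Hence it suffices to exhibit, within each $\sP^j$ containing an optimum, a least-squares optimal point with the desired sparsity.

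Next, I would reparametrize $\gamma := D^j\beta \geq \zeros$. The combination of the original box constraint and $D^j\beta \geq \zeros$ becomes a tighter box $\tilde{\ell} \leq \gamma \leq \tilde{u}$ with $\tilde{\ell} \geq \zeros$, and the $\ell_1$-ball constraint becomes $\ones^\top\gamma \leq C$. Since $D^j$ is invertible, $\rank(XD^j) = \rank(X)$. Assuming $\tilde{\ell}_i < \tilde{u}_i$ for all $i$ (coordinates with $\tilde{\ell}_i = \tilde{u}_i$ are absorbed into the maximal affine constraint set and reduce the effective dimension correspondingly), the constraint matrix $A = \begin{bmatrix} -I \\ I \\ \ones^\top \end{bmatrix}$ has full column rank, the maximal affine set is empty, and $\dim(\sP) = p$, so the hypotheses of \Cref{thm:caratheodory-hull} are trivially satisfied. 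By \Cref{cor:polyhedral-regression} there exists an optimal $\hat{\gamma}$ that binds at least $p - \rank(X)$ constraints of $A$.

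Each binding box constraint on $\gamma_i$ forces $\beta_i = (D^j)_{ii}\gamma_i$ to equal either $0$ (when $\tilde{\ell}_i = 0$) or one of $\ell_i, u_i$. If the simplex constraint $\ones^\top\gamma \leq C$ is not binding, all $p - \rank(X)$ bindings are box bindings, and mapping back via $D^j$ yields a solution with at most $\rank(X)$ non-zero, non-boundary coordinates. The main obstacle is the case when the simplex constraint binds, leaving potentially $\rank(X)+1$ non-boundary coordinates; I would resolve this exactly as in the proof of \Cref{cor:sls-sparse}. Letting $\sA := \set{i : \tilde{\ell}_i < \hat{\gamma}_i < \tilde{u}_i}$ with $\abs{\sA} = \rank(X)+1$, the matrix $(XD^j)_{\cdot\sA}$ has a non-trivial null vector $v$, which we may take so that $\ones^\top v \leq 0$. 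Then $v$ has a strictly negative entry, and shifting $\hat{\gamma}_\sA \to \hat{\gamma}_\sA + tv$ for the smallest $t > 0$ at which a new box constraint binds preserves both the simplex feasibility $\ones^\top\hat{\gamma} \leq C$ and the objective value (since $v \in \nullspace((XD^j)_{\cdot\sA})$), yielding a new optimum with $\abs{\sA}$ reduced to $\rank(X)$. Transforming back to $\beta$ via $D^j$ preserves the count of non-zero, non-boundary coordinates, completing the proof.
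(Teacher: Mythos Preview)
Your proposal is correct and follows essentially the same approach as the paper: partition by sign, reparametrize via $D^j$ to obtain a box-plus-simplex constrained least squares, invoke \Cref{cor:polyhedral-regression}, and then handle the case where the simplex constraint binds by the null-space shifting argument from \Cref{cor:sls-sparse}. Your version is slightly more careful in that you allow a possibly nonzero lower bound $\tilde{\ell} \geq \zeros$ (not assuming $\ell \leq \zeros \leq u$) and explicitly flag the degenerate case $\tilde{\ell}_i = \tilde{u}_i$, whereas the paper tacitly takes the lower bound to be $\zeros$ after reparametrization and simply refers back to \Cref{cor:sls-sparse} for the binding-simplex case.
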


\begin{proof}
Consider the proof setup as in \Cref{cor:lasso-sparse}.
The only change is in $\sP^j$, which is now represented as
\begin{align}
    \sP^j 
    :=
    \set{\beta \in \R^p : AD^j \beta \leq b}
    ,\quad 
    A := \begin{bmatrix}
        -I \\
        I \\
        \ones^\top
    \end{bmatrix}
    ,\quad 
    b := \begin{bmatrix}
        \zeros \\
        \tilde{u}^j \\
        C
    \end{bmatrix}
\end{align}
where $\tilde{u}^j_k = -\ell_k$ if $D^j_{kk} = -1$
and otherwise $u_k$.
The $\sP^j$ constrained least squares problem can be reparametrized as
\begin{align*}
    \minimize_{\beta} &\quad \frac{1}{2} \norm{y - XD^j \beta}_2^2 \\
    \subjto &\quad \zeros \leq \beta \leq \tilde{u}^j, \quad \ones^\top \beta \leq C
\end{align*}
By \Cref{cor:polyhedral-regression},
there exists a solution $\hat{\beta}$ that binds at most
$p-\rank(XD^j) \equiv p - \rank(X)$ constraints.
Note that $\beta_k$ is either non-binding or binding the lower or upper bound, exclusively.
If $\ones^\top \beta < C$, then we have our desired claim.
Otherwise, we may use the same proof as in \Cref{cor:sls-sparse} for this case.
\end{proof}

\section{Coordinate descent for BVLS and NNLS}\label{sec:algorithm}

Consider the usual regression setup where $X \in \R^{n \times p}$
and $y \in \R^n$ are the feature matrix and response vector, respectively.
Let $\ell \leq u$ be vectors in $\overline{\R}^p$ 
denoting our lower and upper bounds (which may not be all finite) for the coefficients.
In this section, we discuss our algorithm for solving BVLS given by \labelcref{eq:algorithm:bvls}.
Note that BVLS is an extension of NNLS 
where $\ell \equiv \zeros$ and $u$ infinite. 
Our proposed algorithm is specifically designed to tackle
high-dimensional settings where $p > n$.
Although our only use-case of our method is in solving NNLS,
the algorithm extends naturally to the case of BVLS, so we present the more general algorithm.
While BVLS may be an ill-posed problem if $p > n$,
there nonetheless always exists a solution to \labelcref{eq:algorithm:bvls}
since the objective is closed, convex, and proper and the constraint is convex.

The most popular optimizer for NNLS is an active-set strategy 
first developed by Lawson and Hanson (LH-NNLS) \citep{Lawson1995}.
They propose an algorithm that searches for
the next variable to be added as a candidate non-zero variable.
Then, the unconstrained least squares solution is obtained for the set of non-zero variables.
A safety measure is put in place in case a variable is selected incorrectly.
Moreover, the algorithm is guaranteed to converge in finite number of steps.
Despite the simplicity of this algorithm,
it is now considered an out-dated algorithm for the scale of problems we see today.
Since then, many similar algorithms improve upon LH-NNLS
while adhering to the active-set strategy \citep{bro:fnnls,van2004fast}.
For example, 
the FNNLS method \citep{bro:fnnls} is a direct modification of LH-NNLS
designed to cache certain quantities for reuse during subsequent similar computations.
Other methods such as the TNT-NN method \citep{MYRE2017755}
further improve LH-NNLS by solving
the least squares solution using a left-preconditioned conjugate gradient descent normal residual routine.
We found one reference that uses coordinate descent to solve NNLS \citep{franc2005sequential};
however, it does not incorporate clever screening or active-set strategies,
which is a large part of our method.

Our algorithm for solving \labelcref{eq:algorithm:bvls} 
is based on the coordinate descent algorithm \citep{tseng:2001}.
Indeed, 
letting $\indic{\beta_i \in C}$ 
to be defined as $0$ whenever $x \in C$ and $+\infty$ otherwise,
we may rewrite \labelcref{eq:algorithm:bvls} as the following unconstrained problem:
\begin{align*}
    \minimize_{\beta}
    \frac{1}{2} \norm{y - X \beta}_2^2 
    + \sum\limits_{i=1}^p \indic{\ell_i \leq \beta_i \leq u_i}
\end{align*}
where the objective is composed of a differentiable convex loss
and a non-differentiable separable convex penalty.
In this setting, \citet{tseng:2001} provides a convergence guarantee.
We choose coordinate descent since it is the state-of-the-art method for fitting sparse regression models 
such as the lasso and group lasso \citep{yang2024adelie,friedman:2010}.
With the help of \Cref{cor:bvls-sparse},
we are assured that there must exist a sparse solution
with few coefficients in the interior of the box constraint.
Hence, we expect coordinate descent to efficiently find a sparse solution,
though no such guarantees exist.
Indeed, we observe in some cases, especially when $y$ is perfectly reproducible,
that the solution contains more than $\rank(X)$ non-binding entries.
However, in the cases when $y$ is not perfectly reproducible,
we often observe a sparse solution.

The vanilla coordinate descent algorithm 
simply cycles through each coefficient $\beta_k$
and performs a \emph{coordinate update} on $\beta_k$,
that is, optimize over the $k$th coordinate while holding all others fixed.
The separability and simplicity of the box constraints
allow us to implement an efficient closed-form coordinate update.
While it is possible in principle to solve BVLS by cycling over all $p$ variables,
it generally suffers from a long computation time when 
$p > n$ (underdetermined system) for two main reasons.
First, due to the non-uniqueness of the solution (in general)
and collinearity of $X$, 
the iterates tend to wander for a long period of time before converging to a solution
if the algorithm is not well warm-started.
Secondly, the sheer computational complexity increases as $p$ grows 
since there are more variables to cycle over. 
Before addressing these issues in full detail,
we give a high-level summary of our method.
First, we identify a well-chosen (and small) subset of features $S$
that most require adjustments to their coefficients.
In the process, we check the Karush-Kuhn-Tucker (KKT) conditions
to verify whether the current $\beta$ is optimal.
If the KKT check does not pass, we solve the BVLS problem only on $S$.
This procedure is repeated until the KKT conditions are satisfied or $S \equiv [p]$.
We summarize the pseudocode in \Cref{alg:bvls}
and explain the steps in detail below.
We note that this type of active-set strategy is ubiquitous not only in 
the BVLS and NNLS literature \citep{Lawson1995,stark1995bounded,MYRE2017755},
but also in lasso type regression literature \citep{friedman:2010,zou:2005,yang2024adelie}.

\begin{algorithm}[t]
    \caption{BVLS}\label{alg:bvls}
    \KwData{$X$, $y$, $\kappa$, $\epsilon_{\mathrm{kkt}}$}
    $\beta = \ell \odot \indic{\abs{\ell} < \abs{u}} + u \odot \indic{\abs{\ell} \geq \abs{u}}$\;
    $\sS = \emptyset$\;
    $\sA = \emptyset$\;
    \While{not converged and KKT check not passed} {
        Compute and sort violations $\delta$ as in \labelcref{eq:violations}\;
        $\abs{\sS_{\mathrm{old}}} = \abs{\sS}$\;
        \For{$j=1, \ldots, p$} {
            \If{$\delta_{(p-j+1)} \leq \epsilon_{\mathrm{kkt}} \norm{X_{\cdot,j}}_2 \norm{y}_2$ 
            or $(p-j+1) \in \sS$}{
                continue\;
            }
            Set a flag to mark that KKT check did not pass\;
            \If{$\abs{\sS} - \abs{\sS_{\mathrm{old}}} == \kappa$}{
                break\;
            }
            Add $(p-j+1)$ into $\sS$\;
        }
        \If{KKT check passed} {break\;}
        \While{not converged on $\sS$} {
            One cycle of coordinate descent on $\sS$
            and add all active variables to $\sA$\;
            \If{converged}{
                Remove binding coefficients from $\sA$\;
                break\;
            }
            Coordinate descent on $\sA$ until convergence\;
            Remove binding coefficients from $\sA$\;
        }
    }
    return $\beta$\;
\end{algorithm}

We first discuss how to choose the subset of features $S$.
Let $f(\beta) = \norm{y - X \beta}_2^2 / 2$ be the objective of BVLS.
Given an initial value $\beta$, we compute the gradient
$\nabla f(\beta) = -X^\top (y - X \beta)$.
The KKT conditions for optimality state that $\beta$ is optimal
if and only if 
\begin{align}
    \forall i \in [p]
    ,\quad 
    \partial_i f(\beta) 
    \in
    \begin{cases}
        [0,\infty) ,& \beta_i = \ell_i \\
        0 ,& \beta_i \in (\ell_i, u_i) \\
        (-\infty, 0] ,& \beta_i = u_i
    \end{cases}
    \label{eq:algorithm:bvls-kkt}
\end{align}
Hence, it is intuitive to first order the coefficients by their violation of the KKT conditions.
Namely, we compute the \emph{violations} 
\begin{align}
    \forall i \in [p]
    ,\quad 
    \delta_i 
    :=
    (\partial_i f(\beta))_- \indic{\beta_i < u_i}
    +
    (\partial_i f(\beta))_+ \indic{\beta_i > \ell_i}
    \label{eq:violations}
\end{align}
and sort in decreasing order from the most violating feature to the least.
We say that feature $i$ is \emph{violating} if $\delta_i > 0$.
While this heuristic provides a reasonable way to determine the order of features
to include in the set $S$, it is not clear thus far \emph{how many} to pick.
Fortunately, \Cref{cor:bvls-sparse} shows that 
there always exists a solution with $\rank(X)$ 
entries in the interior of their respective bounds.
A simple rule that works well in practice
is to batch at most $\rank(X)$ most violating features at a time.
If $\rank(X)$ is not known, one can approximate it with $\min(n, p)$.
In a sense, this batching creates a ``path'' of coefficients
similar to a penalized regression coefficient path (e.g., lasso) as the penalization is varied.

In the process,
we check the KKT conditions on $S^c$ to ensure that we have found an optimal solution.
We simply check that $\nabla f(\beta)$ satisfies \labelcref{eq:algorithm:bvls-kkt},
or equivalently, the violations are $0$.
For numerical stability, we relax the check with a tolerance that scales
with $X$ and $y$.
If the check passes, we have found an optimal solution.
Otherwise, we continue the algorithm with the augmented $S$.
Since $S$ is always strictly augmented,
we have guaranteed convergence via the coordinate descent algorithm
in the worst case that $S \equiv [p]$.
In practice, we rarely reach the worst case especially when $p \gg n$.

Given a subset of features $S$, we next solve BVLS only iterating over $S$.
We discuss a few optimizations to aid in faster convergence.
We keep track of the active set,
which contains the variables in $S$ that are in the interior of their bounds.
We first perform one cycle over $S$
and collect the active set in the process.
If we converged, we are done.
Otherwise, we cycle over only the active set until convergence.
We then remove variables in the active set that are at the boundary
and repeat the entire procedure.
This strategy significantly improves the performance of the algorithm
as $S$ tends to be sufficiently larger than the active set. 

We set the initial value $\beta$ 
with $\beta_i = \ell_i$ if $\abs{\ell_i} < \abs{u_i}$
and otherwise $u_i$.
This way, $\beta$ is always initialized at some vertex of the box
and is closest to zero entry-wise.
Since collinearity is inevitable in high-dimensional settings,
it is beneficial to initialize $\beta$ close to $\zeros$
to mimic the shrinkage in penalized regression.
\section{Numerical experiments}\label{sec:exp}

In this section, we discuss numerical experiments for our
algorithm in \Cref{sec:algorithm}.
We restrict our attention to NNLS for simplicity
and since it is more widely used in practice than the general BVLS problem.
We begin with a simulated data setting in \Cref{sec:exp:sim} and 
then move to a positive spike deconvolution example on real data in \Cref{sec:exp:spike}.
All experiments are run on a standard Macbook M1 Pro.
We implement an efficient version of our algorithm in C++
and include it in the Python package \texttt{adelie}\footnote{https://github.com/JamesYang007/adelie}.

\subsection{Simulated Data}\label{sec:exp:sim}

Our simulation setting is as follows.
We let $X \in \R_+^{n \times p}$ be a non-negative matrix
and $y \in \R^n$ be any vector.
This ensures that the conic hull of $X$ does not span $\R^n$
so that the NNLS problem is non-trivial.
In practice, there are also numerous examples such as 
text mining and functional MRI where the data is non-negative \citep{diakonikolas2022fast}.
We generate each $X_{ij} \sim \Unif(0, 1)$ independently
and set 80\% of the entries of $X$ to zero.
This results in a matrix with pairwise nearly orthogonal vectors.
We generate i.i.d. $y_i \sim \Normal(\mu, \sigma^2)$
where $\mu$ is chosen in the equally-spaced gridding of $[-3\sigma, 3\sigma]$ with $20$ points
and $\sigma = 1$.
We solve NNLS using our method (adelie) and MOSEK \citep{mosek}.
We note that both methods guarantee primal feasibility.
We also ran LH-NNLS and FNNLS \citep{Lawson1995,bro:fnnls},
but we omit their results since they took significantly longer.
\Cref{fig:simulation} shows a comparison
of the runtime and objective for when $n = 100$ and $n = 1000$.
In all cases, $p = 10000$.
In general, our method obtains a solution as accurate as MOSEK's
with a speed-up factor anywhere from 10-1000x.
As MOSEK is a general-purpose solver, 
it is expected that a specialized solver for NNLS will be faster;
however, we believe the scale of improvement is notable.
Moreover, it is clear that our solution tends to be quite sparse
while MOSEK always outputs a fully dense solution since 
their solver is based on interior point method.
However, as $y$ begins to lie in the cone of $X$,
i.e., as $\mu$ increases,
it is possible for our method to include more variables than necessary.
This is shown in the case of $n=100$ when $\mu$ is large
since there are more than $\min(n,p)$ active variables.

\begin{figure}[t]
    \centering 
    \includegraphics[width=\linewidth]{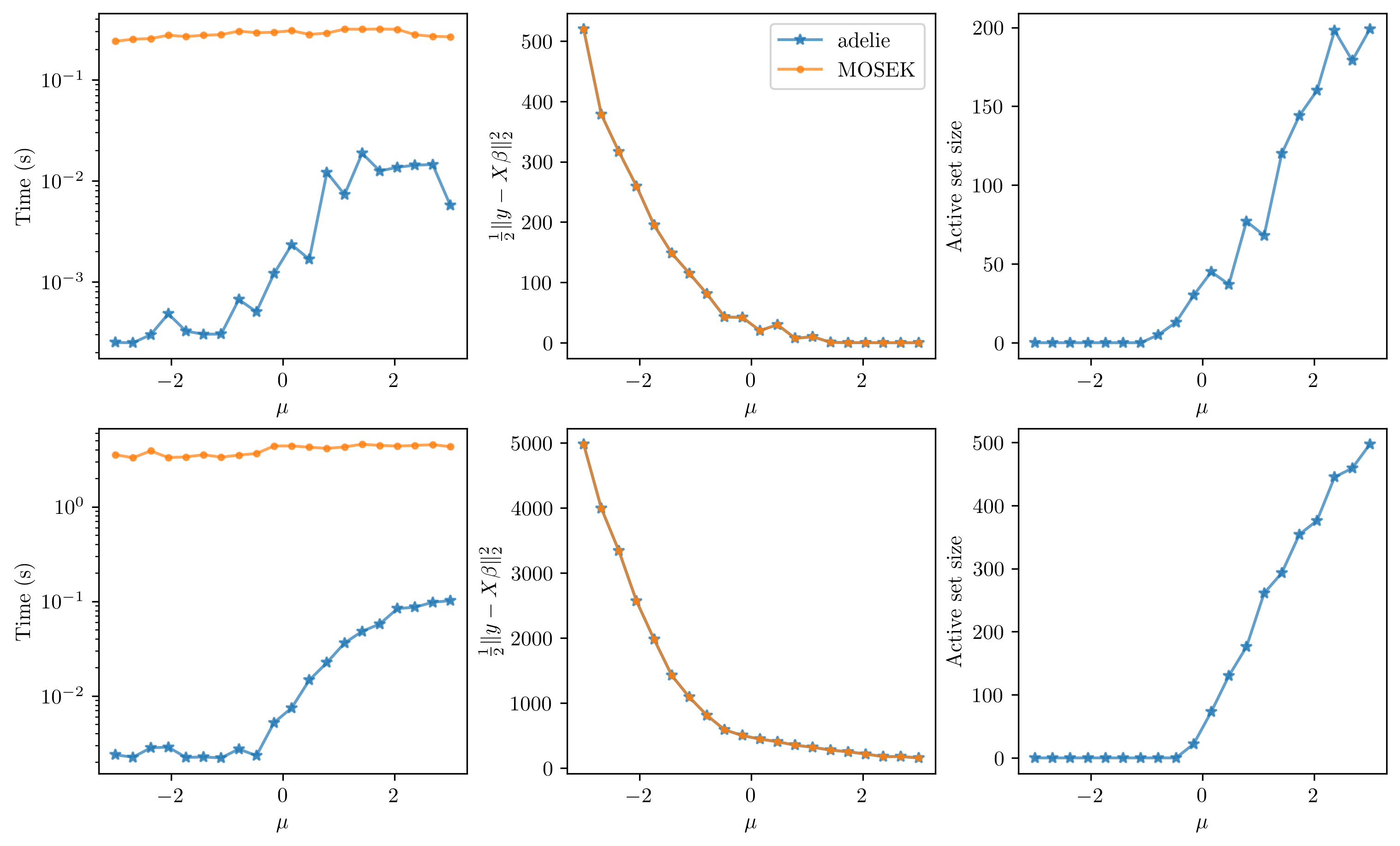}
    \caption{%
    Plots of solving NNLS in the simulation setting
    for various values of $\mu$ using our method (adelie) and MOSEK.
    The top three plots correspond to $n = 100$
    and the bottom three correspond to $n = 1000$.
    In all cases, $p = 10000$.
    The left column shows the time in seconds (smaller is better)
    and the middle column shows the objective
    for adelie and MOSEK.
    The right column shows the number of strictly positive entries 
    in the solution obtained via our method.
    In general, our method obtains a solution as accurate as MOSEK's
    with a speed-up factor anywhere from 10-1000x.
    Moreover, it is clear that our solution tends to be quite sparse
    while MOSEK always outputs a fully dense solution since 
    their solver is based on interior point method.
    However, as $y$ begins to lie in the cone of $X$,
    i.e., as $\mu$ increases,
    it is possible for our method to include more variables than necessary.
    This is shown in the case of $n=100$ when $\mu$ is large
    since there are more than $\min(n,p)$ active variables.
    }
    \label{fig:simulation}
\end{figure}

\subsection{Deconvolution of spike trains on carbon monoxide levels}\label{sec:exp:spike}

We consider a positive spike-deconvolution model given in \citet{li:spike}.
Suppose we observe a response vector $y \in \R^n$ where each $y_i$
are noisy observations of the mean function
\begin{align*}
    f(t; \beta, \tau) := \sum_{k=1}^p \beta_k \phi(t-\tau_k)
\end{align*}
The coefficients $\beta \in \R_+^p$ are assumed to be non-negative
and $\phi(\cdot)$ is the mean-zero Gaussian density with a user-specified standard deviation.
In practice, the user defines $\tau \in \R^p$ as a dense gridding
of an interval of interest, and the goal is to identify
$\beta \geq \zeros$ that best reconstructs $y$.
The positive entries of $\beta$ then correspond to the important spike points in $\tau$.
The feature matrix $X \in \R^{n \times p}$ is then given by
\begin{align*}
    X := \begin{bmatrix}
        \phi(t_1 - \tau_1) & \dots & \phi(t_1 - \tau_p) \\
        \vdots & \ddots & \vdots \\
        \phi(t_n - \tau_1) & \dots & \phi(t_1 - \tau_p) 
    \end{bmatrix}
\end{align*}
where $y_i = f(t_i; \beta, \tau) + \varepsilon_i$ for some noise $\varepsilon_i$.
We use NNLS to solve for $\beta$ by regressing $y$ on $X$ (given the non-negativity constraint). 

\begin{figure}[t]
    \centering 
    \includegraphics[width=\linewidth]{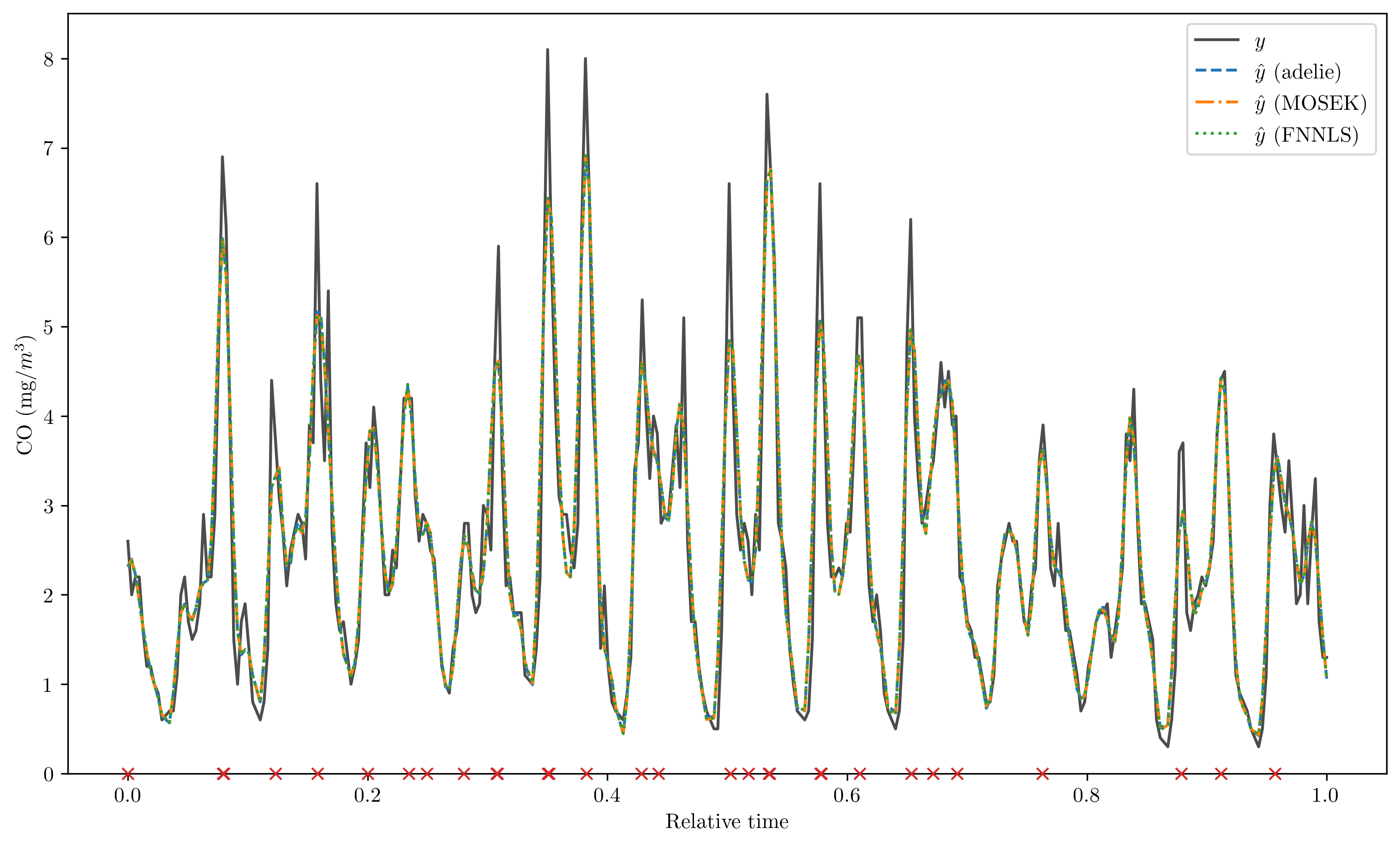}
    \caption{%
    Plot of the carbon monoxide level $y$ within the first two weeks of measurement
    and the NNLS predictions $\hat{y}$ using adelie, MOSEK, and FNNLS.
    We see that NNLS is able to reproduce $y$ fairly well and identify the spikes.
    The red ``x'''s mark the top 30 spike locations $\tau_k$ with the largest coefficients $\beta_k$
    using our method.
    These spike locations align well with the spikes observed in the data.
    All methods produce nearly identical fits.
    }
    \label{fig:spike-deconvolution}
\end{figure}

We use the Air Quality dataset from UCI \citep{air_quality_360} to apply the spike deconvolution method.
This dataset consists of hourly averaged responses from 5 metal oxide chemical sensors
in an Air Quality Checmical Multisensor Device located in a significantly polluted area of an Italian city.
We use only two weeks worth of data from March 10, 2004 to March 24, 2004.
Our goal is to detect the spikes of carbon monoxide levels (mg/$m^3$).
After removing missing data, there are $n=304$ samples.
The time stamps are put relative to the start date and normalized to the unit interval.
We let $p = 1000$ and set $\tau \in \R^p$ equally spaced on the interval $[0, 1]$.
The standard deviation in $\phi$ is defined as 
2 times the median (normalized) time difference in the $n$ samples,
as similarly done in \citet{slawski:nnls}.
We solve the NNLS problem using our method, which takes 0.026 seconds.
For a comparison, we also solve it using MOSEK and FNNLS \citep{bro:fnnls}, 
which take 0.12 seconds and 1.17 seconds, respectively.
In \Cref{fig:spike-deconvolution}, we show the response $y$ along with the predictions $X \hat{\beta}$
where $\hat{\beta}$ is obtained via our method, MOSEK, and FNNLS.
We see that the fits are nearly identical and reproduce $y$ fairly well.
Although all methods produce a feasible solution, 
our method and FNNLS are the only ones that output sparse solutions.
FNNLS is the most accurate method, but it is the most time consuming.
MOSEK is only slightly less accurate despite being about 10x faster, 
however, it outputs a dense solution, which is not as easily interpretable.
We see our method as a bridge between the two methods;
that is, we provide a sparse solution with a similar accuracy as MOSEK and at a fast speed.
Using our method, we identify $225$ spikes for this example.
\section{Discussion}\label{sec:discussion}

We develop a framework for understanding sparse solutions to any polyhedron constrained least squares problem.
In particular, we establish a connection between popular regression methods
such as NNLS, BVLS, SLS, ISLS, and the lasso that they all admit a locally unique sparse solution in high dimensions.
We leverage this theory to derive an efficient coordinate descent based solver for NNLS,
demonstrating that our solver tends to quickly find such sparse solutions
in both simulated data settings and real data examples.

Our implementation is included in the Python package \texttt{adelie},
which is available under the MIT license from PyPI
(\href{https://pypi.org/project/adelie}{https://pypi.org/project/adelie}).
The user can install the latest version by running
\texttt{pip install adelie} in the terminal.

\section*{Acknowledgments}

Trevor Hastie was partially supported by grants DMS-2013736 from the National Science Foundation, 
and grant R01GM134483 from the National Institutes of Health. 
We thank Jonathan Taylor for comments that helped solidify a rigorous grounding of our theoretical work.
Lastly, we also thank 
Tim Morrison, John Cherian, Kevin Guo, and Anav Sood for insightful discussions during the development of this paper.

\bibliography{references}

\newpage 

\begin{appendix}

\section{Background}\label{app:background}

\begin{lemma}[Interior of a polyhedron]\label{lem:polyhedron-interior}
Let $\sP \subseteq \R^p$ be a polyhedron represented by $A$ and $b$. 
Then, the interior of $\sP$ is given by $\interior(\sP) = \set{x \in \sP : Ax < b}$.
\end{lemma}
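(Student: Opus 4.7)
The plan is to prove the two inclusions separately, both of which are essentially applications of continuity of linear functionals.

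For the inclusion $\set{x \in \sP : Ax < b} \subseteq \interior(\sP)$, I would fix $x$ with $Ax < b$ and define $\epsilon_i := b_i - a_i^\top x > 0$ for each $i \in [m]$. Since the map $z \mapsto a_i^\top z$ is Lipschitz with constant $\norm{a_i}_2$, any $y$ in the ball of radius $\delta := \min_i \epsilon_i / (\norm{a_i}_2 + 1)$ around $x$ satisfies $a_i^\top y < a_i^\top x + \epsilon_i = b_i$ for every $i$, so $y \in \sP$ and thus $x$ lies in the topological interior.

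For the reverse inclusion $\interior(\sP) \subseteq \set{x \in \sP : Ax < b}$, I would take $x \in \interior(\sP)$ and argue by contradiction: suppose $a_j^\top x = b_j$ for some index $j$ with $a_j \neq \zeros$ (the case $a_j = \zeros$ is vacuous since such a constraint forces $b_j \geq 0$ and is satisfied automatically, so it can be removed without affecting $\sP$ and we may assume it is not present in the representation). Consider the perturbation $x_t := x + t a_j$ for small $t > 0$. Then $a_j^\top x_t = b_j + t \norm{a_j}_2^2 > b_j$, so $x_t \notin \sP$ for every $t > 0$. But $x \in \interior(\sP)$ guarantees $x_t \in \sP$ for all sufficiently small $t$, yielding a contradiction. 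Hence $a_j^\top x < b_j$ for all non-trivial constraints, so $Ax < b$.

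The only subtle point, and the main ``obstacle,'' is the handling of the zero-row constraints $a_j = \zeros$, since then $a_j^\top x = 0 = b_j$ can coexist with $x$ being in the interior. I would address this by noting at the outset that such rows contribute no geometric content (they are either trivially satisfied when $b_j \geq 0$, making $\sP$ unchanged upon their removal, or they make $\sP$ empty), so the representation can be assumed to consist only of rows with $a_j \neq \zeros$, after which the above argument applies directly. Beyond this bookkeeping, the proof is standard continuity.
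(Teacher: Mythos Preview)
Your proposal is correct and follows essentially the same approach as the paper's proof: for $\interior(\sP) \subseteq \set{Ax < b}$ both you and the paper perturb along the direction $a_j$ to derive a contradiction, handling the zero-row case identically by removing it from the representation; for $\set{Ax < b} \subseteq \interior(\sP)$ you give an explicit radius $\delta = \min_i \epsilon_i / (\norm{a_i}_2 + 1)$ via the Lipschitz bound, whereas the paper packages the same continuity argument into the auxiliary function $\varphi(t) = \max_i (a_i^\top x + t\norm{a_i}_2 - b_i)$ and uses the intermediate value theorem, but these are cosmetic variants of the same idea.
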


\begin{proof}
Without loss of generality, assume $\sP$ is non-empty.
Suppose $x \in \interior(\sP)$.
Then, there exists $B_r(x)$, a ball of radius $r$ centered at $x$, for some $r > 0$
such that $B_r(x) \subseteq \interior(\sP)$.
Suppose $x$ binds a constraint $a_i$ for some $i \in [m]$.
Without loss of generality, we may assume $a_i \neq \zeros$
since it may be removed from the constraints without changing $\sP$.
Consider $w := x + \epsilon a_i$ for any $\epsilon > 0$.
Then, $a_i^\top w = a_i^\top x + \epsilon \norm{a_i}_2^2 = b_i + \epsilon \norm{a_i}_2^2 > b_i$
so that $w \notin \sP$.
In particular, if $\epsilon < r / \norm{a_i}_2$,
then $w \in B_r(x)$, which is a contradiction.

Conversely, suppose $x \in \R^p$ such that $Ax < b$.
Consider $\varphi$ defined by
\begin{align}
    \varphi(t) 
    &:= 
    \max_{i \in [m]} \max_{v : \norm{v}_2=1}\sbr{a_i^\top (x + tv) - b_i}
    \\&=
    \max_{i \in [m]} \sbr{a_i^\top x + t \norm{a_i}_2 - b_i}
\end{align}
which identifies the constraint closest to the boundary
on the ball centered at $x$ with radius $t$.
Note that $\varphi$ is continuous, $\varphi(0) < 0$, and $\varphi(t) \to \infty$ as $t\to \infty$.
Hence, there exists $t^\star > 0$ such that $\varphi(t^\star) = 0$.
Note also that $\varphi$ is strictly increasing, 
so this implies that for any $t \in [0, t^\star)$,
$\varphi(t) < 0$.
In other words, $B_t(x) \subseteq \sP$ so that $x$ is an interior point.
\end{proof}

\begin{lemma}\label{lem:polyhedron-decomposition}
Let $\sP \subseteq \R^p$ be a non-empty polyhedron represented by $A \in \R^{m \times p}$ and $b \in \R^m$.
Define $S := \set{i \in [m] : a_i^\top x = b_i ,\, \forall x \in \sP}$
as the set of constraints that are always binding.
Then, the polyhedron represented by $A_{-S}$ and $b_{-S}$ has a non-empty interior
and the affine space $\{x \in \R^p : A_S x = b_S\}$ is unique.
Consequently, any non-empty polyhedron can be decomposed as
an intersection of a polyhedron with a non-empty interior and a unique affine space.
\end{lemma}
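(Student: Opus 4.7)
The plan is to establish two claims: first, that the polyhedron represented by $A_{-S}$ and $b_{-S}$ has a non-empty interior; second, that the affine space $\set{x \in \R^p : A_S x = b_S}$ coincides with the affine hull of $\sP$, which is intrinsic to $\sP$ and therefore independent of the particular representation $(A,b)$. Together these yield the decomposition $\sP = \set{x \in \R^p : A_{-S} x \leq b_{-S}} \cap \set{x \in \R^p : A_S x = b_S}$ of the stated form.

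For the first claim, I would construct a single point of $\sP$ that strictly satisfies every constraint outside $S$. For each $i \in [m] \setminus S$, the definition of $S$ guarantees a witness $x^{(i)} \in \sP$ with $a_i^\top x^{(i)} < b_i$. Taking $\bar{x}$ to be the uniform average of these finitely many witnesses, convexity gives $\bar{x} \in \sP$, and for every $j \notin S$ the strict inequality $a_j^\top x^{(j)} < b_j$ is preserved under the averaging (since all other terms satisfy at least the weak inequality from membership in $\sP$). Hence $A_{-S} \bar{x} < b_{-S}$, and invoking \Cref{lem:polyhedron-interior} on the polyhedron $\set{x \in \R^p : A_{-S} x \leq b_{-S}}$ places $\bar{x}$ in its interior.

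For the second claim, the inclusion of the affine hull of $\sP$ in $\set{x \in \R^p : A_S x = b_S}$ is immediate, since every $x \in \sP$ satisfies $A_S x = b_S$ and this system of linear equalities is preserved under affine combinations. For the reverse inclusion, take any $y$ with $A_S y = b_S$ and set $z_t := (1-t)\bar{x} + t y$ for $t > 0$. Then $A_S z_t = b_S$, and by continuity of the strict inequality $A_{-S} \bar{x} < b_{-S}$ just established, $A_{-S} z_t < b_{-S}$ holds for all sufficiently small $t > 0$, so $z_t \in \sP$. Writing $y = \bar{x} + t^{-1}(z_t - \bar{x})$ exhibits $y$ as an affine combination of two points of $\sP$. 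Since the affine hull of $\sP$ is determined by $\sP$ alone, the affine space $\set{x \in \R^p : A_S x = b_S}$ is uniquely determined.

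The only mildly subtle point, and what I would flag as the main conceptual obstacle, is clarifying precisely what ``unique'' means: the sets $A_S$ and $b_S$ manifestly depend on the representation $(A,b)$, so uniqueness has to be phrased intrinsically. Identifying $\set{x \in \R^p : A_S x = b_S}$ with the affine hull of $\sP$ resolves this cleanly, and the averaging argument is the one geometric ingredient that makes the entire decomposition go through.
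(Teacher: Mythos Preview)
Your proof is correct. For the first claim, you use the same convex-combination-of-witnesses argument as the paper; the paper frrases it as an iterative removal procedure (find one always-binding constraint, strip it, repeat), but the inductive step is precisely your averaging argument, so the content is identical and your direct version is tidier. For the second claim you do something the paper does not: you explicitly identify $\set{x : A_S x = b_S}$ with the affine hull of $\sP$, which makes representation-independence immediate and also clarifies what ``unique'' means. The paper instead argues by contradiction that a strictly smaller containing affine space would force $\sP$ to have empty interior relative to the already-constructed $\sP'$, which is terser but leaves the intrinsic characterization implicit. One trivial edge case you might mention for completeness is $S = [m]$, where there are no witnesses to average but the polyhedron represented by $A_{-S}, b_{-S}$ is all of $\R^p$ and the claim is vacuous.
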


\begin{proof}
We first show that the polyhedron represented by $A_{-S}$ and $b_{-S}$ has a non-empty interior.
If $\sP$ has a non-empty interior, we are done. 
Suppose $\sP$ has an empty interior.
We claim that there exists a constraint $a_{i^\star}$ such that
$a_{i^\star}^\top x = b_{i^\star}$ for all $x \in \sP$.
Suppose not so that for every constraint $a_i$, there exists $x_i \in \sP$ such that $a_i^\top x_i < b_i$.
We show that there exists a convex combination of $\set{x_k}_{k=1}^m$ such that $Ax < b$, 
which will establish a contradiction by \Cref{lem:polyhedron-interior}.
Indeed, for any $\theta \in \R_{++}^m$ strictly positive and $\sum_k \theta_k = 1$,
\begin{align}
    a_i^\top \sum_k \theta_k x_k
    &<
    b_i
    ,\quad \forall i \in [m]
\end{align}
since for every $i \in [m]$, $a_i^\top x_i < b_i$
while $a_i^\top x_k \leq b_i$ for all $k \neq i$.
This completes the proof for showing there exists a constraint
$a_{i^\star}$ such that $a_{i^\star}^\top x = b_{i^\star}$ for all $x\in \sP$.
Once we identify the constraint $a_{i^\star}$, 
we remove it and repeat the procedure on the polyhedron
$\sP' := \set{x \in \R^p : A_{-i^\star} x \leq b_{-i^\star}}$
until $\sP'$ has a non-empty interior.
Clearly, this process must terminate in finite steps.

We now show that the affine space determined by $S$ is unique.
Suppose there exists a different affine space that contains $\sP$.
Then, $\sP$ must lie in a strictly lower-dimensional subspace,
which implies that $\sP$ has an empty interior.
This contradicts the fact that we have terminated our procedure
when we reached $\sP'$ with a non-empty interior.
\end{proof}

\begin{lemma}\label{lem:polyhedron-linear}
Let $Q \in \R^{n \times p}$ be any matrix and $\sP \subseteq \R^p$ be a polyhedron.
Then, $\hull(Q, \sP)$ is a polyhedron.
\end{lemma}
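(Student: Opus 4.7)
The plan is to exhibit $\hull(Q,\sP)$ as the image of a polyhedron under a linear map, and then invoke the standard fact that such images are polyhedra, which is typically established via Fourier--Motzkin elimination.

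First, I would dispatch the edge case: if $\sP$ is empty, then by convention $\hull(Q,\sP) = \R^n$, which is trivially a polyhedron (represented by $A = \zeros$ and $b = \zeros$). So I assume $\sP$ is non-empty, with a representation $\sP = \set{w \in \R^p : Aw \leq b}$ for some $A \in \R^{m\times p}$ and $b \in \R^m$.

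Next, I would lift the problem into $\R^{n+p}$ by defining
\begin{align*}
    \tilde{\sP} := \set{(x,w) \in \R^n \times \R^p : x - Qw \leq \zeros, \, -x + Qw \leq \zeros, \, Aw \leq b}.
\end{align*}
This is manifestly a polyhedron in $\R^{n+p}$, being cut out by finitely many linear inequalities. Moreover, by construction,
\begin{align*}
    \hull(Q,\sP)
    = \set{x \in \R^n : \exists\, w \in \R^p ,\, (x,w) \in \tilde{\sP}}
    = \pi_{\R^n}(\tilde{\sP}),
\end{align*}
where $\pi_{\R^n}$ is the projection onto the first $n$ coordinates. It therefore suffices to show that the projection of a polyhedron in $\R^{n+p}$ onto $\R^n$ is a polyhedron in $\R^n$.

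To prove this projection fact, I would apply Fourier--Motzkin elimination, eliminating the variables $w_1, w_2, \ldots, w_p$ one at a time. The key lemma is the single-variable step: given a polyhedron described by inequalities in variables $(z, w_k)$, split the inequalities into those where the coefficient of $w_k$ is positive, negative, or zero. Forming all pairwise combinations of a positive-coefficient inequality with a negative-coefficient inequality (to cancel $w_k$), together with the zero-coefficient inequalities, yields a finite system of linear inequalities in $z$ alone that exactly describes the projection. Iterating $p$ times yields a finite polyhedral description of $\hull(Q,\sP)$.

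The main obstacle is the Fourier--Motzkin step itself, which requires a careful bookkeeping argument to verify that the combined system is both sound (every projected point satisfies it) and complete (every point satisfying the combined system lifts to a point in the original polyhedron). This is standard and I would either invoke it as a classical result from polyhedral theory or sketch the single-variable elimination step to make the paper self-contained.
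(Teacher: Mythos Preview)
Your proof is correct but follows a genuinely different route from the paper's. The paper invokes the Minkowski--Weyl decomposition: write $\sP = \cone(B_{\sC}) + \conv(B_{\sV})$ for finite generating sets $B_{\sC}, B_{\sV}$, observe that $Q(\sP) = \cone(QB_{\sC}) + \conv(QB_{\sV})$, and conclude by the reverse direction of the decomposition theorem that $Q(\sP)$ is a polyhedron. Your approach instead stays in the $H$-representation: you lift to a polyhedron in $\R^{n+p}$ by adjoining the constraints $x = Qw$, then project out the $w$-coordinates via Fourier--Motzkin elimination. The paper's argument is shorter once Minkowski--Weyl is granted, but that theorem is itself nontrivial; your argument is more self-contained and constructive, trading conceptual cleanliness for bookkeeping. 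Both are standard proofs of the fact that linear images of polyhedra are polyhedra, and either would be acceptable here.
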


\begin{proof}
Without loss of generality, assume $\sP$ is non-empty.
By the polyhedron decomposition theorem,
$\sP$ can be represented as a Minkowski sum of a cone and a polytope.
Let $\sC = \cone(B_{\sC})$ and $\sV = \conv(B_{\sV})$ denote the cone and the polytope
generated by $B_{\sC}$ and $B_{\sV}$, respectively.
Then, $Q(\sP) = \cone(QB_{\sC}) + \conv(QB_{\sV})$.
By the polyhedron decomposition theorem again,
$Q(\sP)$ is a polyhedron.
\end{proof}

\begin{lemma}\label{lem:polyhedron-intersect}
Let $\sP \subseteq \R^p$ be a polyhedron represented by 
$A \in \R^{m \times p}$ and $b \in \R^m$ with a non-empty interior.
Suppose $y$ is any element in the interior of $\sP$ and $v \in \R^p$ any vector.
Then, $v \notin \nullspace(A)$ if and only if
there exists a point $w = y + tv$ for some $t \neq 0$ such that
$w$ lies in $\sP$ and binds at least one constraint of $A$.
\end{lemma}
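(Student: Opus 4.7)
The plan is to handle the two directions of the ``iff'' separately. The backward direction is an immediate algebraic consequence of the interior characterization (\Cref{lem:polyhedron-interior}), while the forward direction requires moving along the ray from $y$ in the direction of $v$ (or $-v$) and selecting the smallest step size that first hits a boundary constraint.

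For the backward direction, suppose there exists $t \neq 0$ and $w = y + tv \in \sP$ that binds some constraint, i.e., $a_i^\top w = b_i$. Since $y$ lies in the interior of $\sP$, \Cref{lem:polyhedron-interior} gives $a_i^\top y < b_i$. Rearranging $a_i^\top y + t\, a_i^\top v = b_i$ yields $t\, a_i^\top v = b_i - a_i^\top y > 0$, which forces $a_i^\top v \neq 0$ and hence $v \notin \nullspace(A)$.

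For the forward direction, assume $Av \neq \zeros$ so that there is some index with $a_i^\top v \neq 0$. Since $y$ is interior, $Ay < b$ strictly. For each $i \in [m]$ with $a_i^\top v \neq 0$, define $t_i := (b_i - a_i^\top y)/(a_i^\top v)$, whose sign matches that of $a_i^\top v$ since the numerator is strictly positive. Let
\begin{align*}
    t^+ := \min\set{t_i : a_i^\top v > 0}, \qquad t^- := \max\set{t_i : a_i^\top v < 0},
\end{align*}
where either may be undefined if the corresponding index set is empty; by hypothesis at least one is defined. Choose $t$ to be $t^+$ if defined (else $t^-$); in either case $t \neq 0$. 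The key verification is that $w = y + tv$ remains in $\sP$: for any constraint $j$ with $a_j^\top v = 0$ we have $a_j^\top w = a_j^\top y < b_j$; for $a_j^\top v > 0$ the inequality $t \leq t^+ \leq t_j$ yields $a_j^\top w \leq b_j$; and for $a_j^\top v < 0$ we have $t \geq t^- \geq t_j$, again giving $a_j^\top w \leq b_j$ (when only one of $t^+, t^-$ is defined, the opposite sign case makes $t\, a_j^\top v$ have the wrong sign to cause a violation, so the inequality is automatic). The index attaining the extremum in the definition of $t$ is bound with equality, producing the desired binding constraint.

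The only subtlety is keeping track of signs and handling the asymmetric case when only one of $t^+$, $t^-$ exists; otherwise the argument is routine. The geometric content is simply that a line through an interior point along a direction not orthogonal to every constraint must exit the polyhedron in at least one direction, hitting a facet at a first crossing time.
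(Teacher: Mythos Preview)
Your proof is correct and follows essentially the same approach as the paper: for the forward direction you compute the crossing times $t_i=(b_i-a_i^\top y)/(a_i^\top v)$ and pick a first crossing, and for the backward direction you use the strict-interior characterization. The only cosmetic difference is that the paper selects the $t_i$ of smallest magnitude (which is necessarily one of your $t^+$ or $t^-$), whereas you default to $t^+$ when it exists; both selections keep $w\in\sP$ for the same reasons, and your case-by-case verification is in fact more explicit than the paper's one-line justification.
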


\begin{proof}
Since $y$ is in the interior of $\sP$,    
we must have that $Ay < b$ by \Cref{lem:polyhedron-interior}.

Suppose $v \notin \nullspace(A)$ so that $Av \neq \zeros$.
Then, there exists $i \in [m]$ such that $a_i^\top v \neq 0$.
For any constraint $a_i$ not orthogonal to $v$,
\begin{align}
    a_i^\top (y + tv) = b_i
    \iff
    t = \frac{b_i - a_i^\top y}{a_i^\top v}
    \label{lem:polyhedron-intersect:eq:t}
\end{align}
Label the solution for $t$ in \labelcref{lem:polyhedron-intersect:eq:t} 
as $t_i^\star$ for each $i$ such that $a_i^\top v \neq 0$.
Then, we simply choose the $t_i^\star$ that is smallest in magnitude,
that is,
\begin{align}
    i^\star := \argmin_{i : a_i^\top v \neq 0} \abs{t_i^\star}
    ,\quad
    t^\star := t_{i^\star}^\star
\end{align}
Note that since each $t_{i}^\star \neq 0$,
we also have that $t^\star \neq 0$.
With this choice of $t^\star$,
we have the guarantee that $w := y + t^\star v$
binds the $i^\star$th constraint
and is the first binding constraint so that $w$ remains in $\sP$.

Conversely, suppose $v \in \nullspace(A)$ so that $Av = \zeros$.
Then, for any point $w = y + tv$ for any $t \in \R$,
we have that $Aw = Ay < b$.
Hence, every such point $w$ does not bind any constraint.
\end{proof}

\begin{lemma}\label{lem:polyhedron-low-rank}
Let $\sP \subseteq \R^p$ be a non-empty polyhedron 
represented by $A \in \R^{m \times p}$ and $b \in \R^m$
with $S$ as its maximal affine constraints.
Let $\xi \in \R^p$ be any vector satisfying $A_S \xi = b_S$
and $V \in \R^{p \times \dim(\sP)}$ be any full column rank matrix
such that $A_S V \equiv \zeros$.
Define the following quantities
\begin{align}
    A' := A_{-S} V \in \R^{(m-\abs{S}) \times \dim(\sP)}
    ,\quad
    b' := b_{-S} - A_{-S}\xi \in \R^{m-\abs{S}}
    \label{lem:polyhedron-low-rank:eq:low-rank}
\end{align}
and let $\sP' \subseteq \R^{\dim(\sP)}$ be the polyhedron 
represented by \labelcref{lem:polyhedron-low-rank:eq:low-rank}.
Then, every $x \in \sP$ can be represented uniquely as $x = \xi + Vx'$ with $x' \in \sP'$.
Moreover, $\sP'$ has a non-empty interior.
\end{lemma}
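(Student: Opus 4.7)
The plan is to split the statement into two independent claims and handle each separately: (i) the bijection between $\sP$ and $\sP'$ given by the affine map $x = \xi + Vx'$, and (ii) the non-emptiness of the interior of $\sP'$. Claim (i) is essentially a linear-algebra fact about the null space of $A_S$, while claim (ii) mirrors the convex-combination argument already developed in the proof of Lemma \ref{lem:polyhedron-decomposition}.

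For (i), I would first observe that every $x \in \sP$ satisfies $A_S x = b_S$ by definition of the maximal affine constraints, so combined with $A_S \xi = b_S$ we get $x - \xi \in \nullspace(A_S)$. The dimension of this null space is $p - \rank(A_S) = \dim(\sP)$, which coincides with the number of columns of $V$. Because $V$ has full column rank and satisfies $A_S V = \zeros$, its columns form a basis of $\nullspace(A_S)$, and therefore there is a unique $x' \in \R^{\dim(\sP)}$ with $x - \xi = Vx'$. The membership $x' \in \sP'$ is then just the calculation $A'x' = A_{-S}Vx' = A_{-S}(x - \xi) \le b_{-S} - A_{-S}\xi = b'$. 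The reverse direction is equally direct: for any $x' \in \sP'$, the point $x := \xi + Vx'$ satisfies $A_S x = A_S \xi = b_S$ and $A_{-S}x = A_{-S}\xi + A'x' \le b_{-S}$.

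For (ii), by \Cref{lem:polyhedron-interior} it suffices to exhibit an $x_0' \in \R^{\dim(\sP)}$ with $A' x_0' < b'$, which via the bijection in (i) is equivalent to exhibiting $x_0 \in \sP$ with $A_{-S} x_0 < b_{-S}$ strictly. Here I would invoke the maximality of $S$: for each $i \in [m] \setminus S$, by definition of $S$ there must exist some $x_i \in \sP$ with $a_i^\top x_i < b_i$ (otherwise $i$ would belong to $S$). Taking any strictly positive convex combination $x_0 = \sum_{i \notin S} \theta_i x_i$ keeps $x_0 \in \sP$ by convexity, and for each $j \notin S$ the sum $a_j^\top x_0 = \sum_i \theta_i a_j^\top x_i$ is strictly less than $b_j$ because the $j$th summand is strict while the rest are at most $b_j$. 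Passing back through the bijection produces the required interior point of $\sP'$.

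The main obstacle I anticipate is not analytical but notational: one must carefully distinguish the ambient space $\R^p$ (where $\sP$ and the matrix $A$ live) from the reduced space $\R^{\dim(\sP)}$ (where $\sP'$ and $A'$ live), and apply \Cref{lem:polyhedron-interior} in the reduced space rather than the original. Once this bookkeeping is done, both steps reduce to short computations, and the convex-combination trick is essentially imported verbatim from \Cref{lem:polyhedron-decomposition}.
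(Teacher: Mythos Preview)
Your proposal is correct and follows essentially the same approach as the paper: part (i) is handled via the linear-algebra identification of $\nullspace(A_S)$ with $\cspan(V)$ together with the identity $A_{-S}x - b_{-S} = A'x' - b'$, and part (ii) uses the convex-combination trick to produce a point $x_0 \in \sP$ with $A_{-S}x_0 < b_{-S}$, then transfers it through the bijection and applies \Cref{lem:polyhedron-interior} in the reduced space. The only cosmetic difference is that the paper cites \Cref{lem:polyhedron-decomposition} for the existence of such an $x_0$, whereas you reproduce that argument inline.
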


\begin{proof}
By hypothesis, $\sP \equiv \set{x \in \R^p : A_{-S} x \leq b_{-S} , \, A_S x = b_S}$.
First, it is a standard linear algebra fact that 
any $x \in \R^p$ such that $A_S x = b_S$ can be 
uniquely identified with the low-rank representation 
$x = \xi + Vx'$ for some $x' \in \R^{\dim(\sP)}$.
Then, we have the first claim by simple algebra that
\begin{align}
    A_{-S}x - b_{-S}
    =
    A'x' - b'
    \label{lem:caratheodory-hull-full:eq:id}
\end{align}

We now show that $\sP'$ has a non-empty interior.
Since $\set{x \in \R^p : A_{-S}x \leq b_{-S}}$ has a non-empty interior by \Cref{lem:polyhedron-decomposition}, 
there exists $x \in \sP$ such that $A_{-S} x < b_{-S}$
by \Cref{lem:polyhedron-interior}.
By \labelcref{lem:caratheodory-hull-full:eq:id}, we must have that $A'x' < b'$.
Another application of \Cref{lem:polyhedron-interior} shows that $\interior(\sP')$ is non-empty.
\end{proof}

\section{Proofs of the main results}\label{app:proofs-main}

\subsection{Proof of Lemma \ref{lem:caratheodory-hull-full}}\label{app:proofs-main:caratheodory-hull-full}

\begin{proof}
Without loss of generality, assume every constraint is non-zero
since otherwise we may remove these constraints without changing $\sP$ or $\nullspace(A)$.
If $p \leq n$, the result is trivial. 
Hence, we assume without loss of generality that $p > n$.

We first prove the theorem when $p = n+1$.
Suppose $x$ is in the $\sP$-hull of $Q$ so that
$x = Qy$ for some $y \in \sP$.
If $y$ binds at least one constraint of $A$, we are done.
Suppose then $y$ binds no constraint of $A$ so that it lies in the interior of $\sP$
by \Cref{lem:polyhedron-interior}.
Since there are more columns than rows in $Q$,
we necessarily have that $\nullspace(Q)$ is non-trivial.
Hence, there exists $v \neq \zeros$ such that $Qv = \zeros$.
Since $v \in \nullspace(Q) \setminus \set{\zeros} \subseteq \nullspace(A)^c$,
we have by \Cref{lem:polyhedron-intersect} that
there exists $w = y + tv$ for some $t \neq 0$ such that 
$w$ lies in $\sP$ and binds at least one constraint of $A$.
Finally, $Qw = x$ by construction.
This proves the theorem when $p = n+1$.

Now, suppose the theorem holds for $p > n$.
We now prove it holds for $p+1$.
Suppose $x$ is a $\sP$-combination of $Q = \set{q_i}_{i=1}^{p+1}$.
Consider the embedding 
\begin{align}
    \forall i \in [p+1],
    \quad
    u_i := \begin{bmatrix}
        q_i \\ 
        \zeros
    \end{bmatrix}
    \in \R^{p}
    ,\quad
    v := \begin{bmatrix}
        x \\
        \zeros
    \end{bmatrix}
    \in \R^{p}
\end{align}
where we pad our original points with zeros until the length is $p$.
Define $U := \set{u_i}_{i=1}^{p+1}$.
Then, $v$ is a $\sP$-combination of $U$.
By the preceding argument, noting that $\nullspace(Q) \equiv \nullspace(U)$,
$v$ can then be written as a $\sP$-combination
of $U$ with an element $w$ that binds at least one constraint of $A$.
Finally, we conclude that $x$ can be written as a $\sP$-combination of $Q$ 
with (the same) $w$ that binds at least one constraint of $A$.
Without loss of generality, suppose the binding constraint is the first constraint $a_1$.
Consider the polyhedron $\set{z \in \R^{p+1} : A_{-1} z \leq b_{-1}, \, a_1^\top z = b_1}$
whose dimension is then $p$ (since $a_1 \neq \zeros$).
By \Cref{lem:polyhedron-low-rank}, it can be represented 
with the polyhedron $\sP'$
defined by $A'$ and $b'$ in \labelcref{lem:polyhedron-low-rank:eq:low-rank}
with a non-empty interior.
Let $V$ and $\xi$ be as in \Cref{lem:polyhedron-low-rank}
and define $Q' := QV$ and $x' := x - Q\xi$.
Note that $x' \in \hull(Q', \sP')$.
We will apply the induction hypothesis on $Q'$, $\sP'$, and $x'$.
Indeed, $\nullspace(Q') \setminus \set{\zeros} \subseteq \nullspace(A')^c$
since for any $y \in \nullspace(Q') \setminus \set{\zeros}$,
$Vy \in \nullspace(Q)$, which implies that $AVy \neq \zeros$ by hypothesis.
Since $a_1^\top V = \zeros^\top$ by definition of $V$,
we have that $A'y = A_{-1}Vy \neq \zeros$ so $y \in \nullspace(A')^c$.
Hence, the assumptions of the theorem are satisfied.
By the induction hypothesis,
$x'$ is a $\sP'$-combination of $Q'$ with an element $w'$ 
that binds at least $\min(\rank(A'), p - n)$ constraints of $A'$
that forms a full row rank sub-matrix of $A'$.
Letting $w := \xi + Vw'$
and noting that
\begin{align}
    A_{-1} w - b_{-1} = A'w' - b',
\end{align}
$x$ is a $\sP$-combination of $Q$ with $w$
that binds at least $\min(\rank(A'), p-n)$ constraints of $A_{-1}$,
which we denote as $T \subseteq [m-1]$.
Hence, including the first binding constraint,
$w$ binds at least $\min(\rank(A')+1, p+1-n)$ constraints of $A$.
It remains to show that $\rank(A) \equiv \rank(A') + 1$
and that appending $a_1$ as a row to $(A_{-1})_T$ keeps it full row rank.
To show the first claim, observe that
\begin{align}
    A \begin{bmatrix}
        a_1 & V 
    \end{bmatrix}
    =
    \begin{bmatrix}
        Aa_1& AV
    \end{bmatrix}
    \label{lem:caratheodory-hull-full:eq:rankA}
\end{align}
By definition of $V$, $\begin{bmatrix}
    a_1 & V
\end{bmatrix}$ is invertible so that the rank of the left side of 
\labelcref{lem:caratheodory-hull-full:eq:rankA} is $\rank(A)$.
Note that $Aa_1 \notin \cspan(AV)$ since the first row of $AV$ 
is $a_1^\top V \equiv \zeros^\top$
while that of $Aa_1$ is $\norm{a_1}_2^2 \neq 0$.
Again using that $a_1^\top V = \zeros^\top$,
$\rank(AV) = \rank(A_{-1}V) = \rank(A')$.
This shows that $\rank(A) = \rank(A') + 1$.
Next, we show the second claim.
It suffices to show that $a_1$ is not in the row space of $(A_{-1})_T$.
Otherwise, there exists $\gamma \neq \zeros \in \R^{\abs{T}}$ 
such that $a_1 = (A_{-1})_T^\top \gamma$.
Then, 
\begin{align}
    \zeros = V^\top a_1 = V^\top (A_{-1})_T^\top \gamma 
    = ((A_{-1})_T V)^\top \gamma
    = (A_T')^\top \gamma
    \implies 
    \gamma = \zeros
\end{align}
which is a contradiction.
This shows that including $a_1$ as another row in $(A_{-1})_T$
keeps it full row rank.
By induction, this concludes the proof.

\end{proof}

\subsection{Proof of Theorem \ref{thm:caratheodory-hull}}\label{app:proofs-main:caratheodory-hull}

\begin{proof}

First, consider the setting of \Cref{lem:caratheodory-hull-full}.
We argue that the conclusion can be improved from 
$\min(\rank(A), (p-n)_+)$ to 
$\min(\rank(A), (p-\rank(Q))_+)$.
We first perform a QR-decomposition on $Q$ 
to get $Q = \tilde{Q} \tilde{R}$ 
with $\tilde{Q} \in \R^{n \times r}$, $\tilde{R} \in \R^{r \times p}$,
and $r := \rank(Q)$.
Note that $\tilde{Q}$ has orthogonal columns 
so that $\nullspace(Q) \equiv \nullspace(\tilde{R})$.
This shows that $\tilde{R}$ satisfies the assumptions of \Cref{lem:caratheodory-hull-full}.
Now, if $x \in \hull(Q, \sP)$,
there exists $w \in \sP$ such that $x = Qw = \tilde{Q} \tilde{R} w$.
We then apply \Cref{lem:caratheodory-hull-full} for $\tilde{R}$, $\sP$, and $\tilde{x} := \tilde{R} w$
to get that $\tilde{x}$ is a $\sP$-combination of $\tilde{R}$
with an element $\tilde{w} \in \sP$ that binds at least 
$\min(\rank(A), (p-r)_+)$ constraints of $A$
that form a full row rank sub-matrix.
Since $\tilde{R}w = \tilde{R}\tilde{w}$,
this immediately shows that $x = \tilde{Q} \tilde{R} w = \tilde{Q} \tilde{R} \tilde{w} = Q\tilde{w}$.
Hence, we have our improved result.

If $\sP$ has a non-empty interior,
then \Cref{lem:caratheodory-hull-full} gives us the desired conclusion
noting that $\dim(\sP) \equiv p$.
Suppose $\sP$ has an empty interior.
Then, $\rank(A_S) \neq 0$ 
so that \Cref{lem:polyhedron-low-rank} reduces our representation of $\sP$ 
to a lower-dimensional polyhedron $\sP'$ represented with $A'$ and $b'$
given in \labelcref{lem:polyhedron-low-rank:eq:low-rank}.
Moreover $\sP'$ has a non-empty interior.
Then, with the following quantities
\begin{align}
    Q' := QV
    ,\quad
    x' := x - Q\xi
\end{align}
using $\xi$ as in \Cref{lem:polyhedron-low-rank} (and $V$ as in the hypothesis), we have
\begin{align}
    x \in \hull(Q, \sP)
    \iff
    x' \in \hull(Q', \sP')
\end{align}
Note that the conditions of \Cref{lem:caratheodory-hull-full} 
are satisfied trivially for $Q'$ and $\sP'$ by definition.
Moreover, $\dim(\sP') \equiv \dim(\sP)$ by construction.
Applying \Cref{lem:caratheodory-hull-full} for $Q'$, $\sP'$, and $x'$,
we have that $x'$ is a $\sP'$-combination of $Q'$
with an element $w' \in \sP'$ that binds at least 
$\min(\rank(A_{-S}V), (\dim(\sP) - \rank(Q'))_+)$ constraints of $A'$
that form a full row rank sub-matrix of $A'$.
Then, letting $w := \xi + V w'$ and using \labelcref{lem:caratheodory-hull-full:eq:id},
$w$ binds at least $\min(\rank(A_{-S}V), (\dim(\sP)-\rank(Q'))_+)$
constraints of $A_{-S}$, which we denote as $T \subseteq [m-\abs{S}]$.
It remains to show that $\rank(A_{-S}V) = \rank(A) - \rank(A_S)$
and that the binding constraints form a full row rank sub-matrix of $A_{-S}$.
We first show the first claim.
Let $V^\perp \in \R^{p \times \rank(A_S)}$ denote a set of orthogonal 
vectors that span the row space of $A_S$.
Then, $A [V^\perp \quad V]$ has the same rank as $A$
and it is clear that $\rank(AV^\perp) = \rank(A_S)$. 
Moreover, since $A_S V = \zeros$, 
we have $\rank(AV) = \rank(A_{-S} V)$.
Finally, every column of $AV^\perp$ is not in $\cspan(AV)$.
Otherwise, $A_S V^\perp = A_S V M = \zeros$ for some matrix $M$,
which is a contradiction since every column of $V^\perp$ lies in $\cspan(A_S)$.
This proves that $\rank(A_{-S}V) = \rank(A) - \rank(A_S)$.
Next, we show that $(A_{-S})_T$ is a full row rank sub-matrix.
Otherwise, there exists $\gamma \neq \zeros \in \R^{\abs{T}}$ such that
\begin{align}
    (A_{-S})_T^\top \gamma
    =
    \zeros
    \implies
    \zeros 
    = 
    V^\top (A_{-S})_T^\top \gamma
    =
    (A'_T)^\top \gamma
    \implies
    \gamma = \zeros
\end{align}
which is a contradiction.
This completes the proof.
\end{proof}

\subsection{Proof of \Cref{thm:caratheodory-hull-local-uniqueness}}%
\label{app:proofs-main:caratheodory-hull-local-uniqueness}

\begin{proof}
We may assume without loss of generality $\sP$ has a non-empty interior
since otherwise we may first restrict to the lower-dimensional space
removing the maximal affine constraints by \Cref{lem:polyhedron-low-rank}.
Hence, we assume $p \equiv \dim(\sP)$.
We may further assume $Q$ is full row rank so that $n \equiv \rank(Q)$.
We are now in the setting of \Cref{lem:caratheodory-hull-full} where 
$Q$ is full row rank and $\sP$ has a non-empty interior.

Consider the proof of \Cref{lem:caratheodory-hull-full}.
Recall that the proof procedure is to repeatedly find directions $v \in \nullspace(Q)$
such that $w = y + tv$ binds a constraint (where $x = Qy$), say $a_1$.
Let $V \in \R^{p \times (p-1)}$ be as in \Cref{lem:polyhedron-low-rank}
that reduces the polyhedron by one dimension, namely, restricting to the affine space 
$\set{w \in \R^p : a_1^\top w = b_1}$.
We claim that $\rank(QV) = \rank(Q)$.
Indeed, it is equivalent to show that $\dim(\nullspace(QV)) = \dim(\nullspace(Q)) - 1$.
This is easily seen by the fact that 
since $a_1^\top v \neq 0$ by \Cref{lem:polyhedron-intersect} and $V^\top a_1 = \zeros$,
we must have $v \notin \cspan(V)$.
Since the rank is preserved as we apply induction,
we see that eventually $Q$ will be reduced to an invertible matrix,
in which case, the representation for $x$ is unique.
\end{proof}

\end{appendix}

\end{document}